\pgfplotsset{ 
  compat=newest, 
   legend style =
  {font=\footnotesize },
  label style = {font=\footnotesize},
every tick label/.append style={font=\footnotesize}
  }
\newcommand{\mat}[1]{\boldsymbol{#1}}
\renewcommand{\eqref}[1]{Eq.~(\ref{#1})}  
\newtheorem{remark}{Remark}
\newtheorem{theorem}{Theorem}
\newtheorem{lemma}{Lemma}
\newtheorem{proposition}{Proposition}
\newtheorem{problem}{Problem}
\newtheorem{example}{Example}
\def\h{1.3}
\def\hh{1.3}
\def\l{2.3}
\def\ll{1.9}
\title{\LARGE \bf
On Adaptive-Gain Control of Replicator Dynamics in Population Games
}
\author{Lorenzo Zino, Mengbin Ye, Alessandro Rizzo, and Giuseppe Carlo Calafiore
\thanks{L. Zino, A. Rizzo, and G.C. Calafiore are with the Department of Electronics and Telecommunications, Politecnico di Torino, Torino, Italy (\texttt{\{lorenzo.zino,alessandro.rizzo,giuseppe.calafiore\}@ polito.it}). 
M. Ye is with the Centre for Optimisation and Decision Science, Curtin University, Perth, Australia (\texttt{mengbin.ye@curtin.edu.au}).
A. Rizzo is also with the Institute for Invention, Innovation, and Entrepreneurship, New York University Tandon School of Engineering, Brooklyn NY, US. G.C. Calafiore is also with the College of Engineering \& Computer Science, VinUniversity, Hanoi, Vietnam. This work was partially supported by the Western Australian Government (Premier's Science Fellowship Program) and was carried out within the FAIR - Future Artificial Intelligence Research and received funding from the European Union Next-GenerationEU (PIANO NAZIONALE DI RIPRESA E RESILIENZA (PNRR) – MISSIONE 4 COMPONENTE 2, INVESTIMENTO 1.3 – D.D. 1555 11/10/2022, PE00000013). This manuscript reflects only the authors’ views and opinions, neither the European Union nor the European Commission can be considered responsible for them. }%
}
\begin{document}

\maketitle
\thispagestyle{empty}

\begin{abstract}
Controlling evolutionary game-theoretic dynamics is a problem of paramount importance for the systems and control community, with several applications spanning from social science to engineering. Here, we study a population of individuals who play a generic $2$-action matrix game, and whose actions evolve according to a replicator equation ---a nonlinear ordinary differential equation that captures salient features of the collective behavior of the population. Our objective is to steer such a population to a specified equilibrium that represents a desired collective behavior ---e.g., to promote cooperation in the prisoner's dilemma. To this aim, we devise an adaptive-gain controller, which regulates the system dynamics by adaptively changing the entries of the payoff matrix of the game. The adaptive-gain controller is tailored according to distinctive features of the game, and conditions to guarantee global convergence to the desired equilibrium are established. 
\end{abstract}

\section{Introduction}\label{sec:intro}

Evolutionary game theory is a mathematical framework that models the dynamics of  large populations, whose individuals repeatedly engage in strategic interactions~\cite{Sandholm2010,Hofbauer2009}. Such a framework, originally proposed for biological systems~\cite{Smith1973}, has found applications in a broad range of population dynamics, spanning from economics to social science and ecology, attracting a growing interest~\cite{Quijano2017,Barreiro-Gomez2016,Como2021,Stella2022cooperation}. In particular, controlling evolutionary game-theoretic dynamics has emerged as a problem of interest for the systems and control community~\cite{Quijano2017,riehl2018survey,Grammatico2017}. Notably, even steering the population to a desired equilibrium constitutes a challenge, since such an  equilibrium may be one among several, and may even be unstable or not globally attractive for the uncontrolled population dynamics. 

Diverse open-loop approaches have been proposed in the literature to tackle this problem. Several methods rely on the assumption that it is possible to act on the individuals directly ---e.g., by setting the behavior of part of the population who can act as a committed minority~\cite{centola2018experimental_tipping,Como2022targeting}--- or indirectly ---e.g., by incorporating features into their decision-making, such as sensitivity to trends~\cite{cdc2021,Zino2022nexus,zino2023ifac},  reciprocity~\cite{vanVeelen2012,Park2022cooperative}, or reputation~\cite{Giardini2021}. However, such  interventions on the individuals are not always feasible. To address this limitation,  alternate methods have been proposed, in which incentives are used to modify the structure of the payoff in order to favor the adoption of the desired action~\cite{riehl2018incentive,Eksin2020,Gong2022,Zhu2023}. However, these methods also suffer from   critical limitations. Namely, they require accurate information on the structure and characteristics of the game in order to design and tune the incentives in an open-loop fashion. This hinders the possibility to use them in many real-world scenarios, where uncertainty in the payoffs may cause open-loop approaches to fail, calling for the development of closed-loop schemes to control evolutionary game-theoretic dynamics.

In this paper, we fill in this gap by proposing a novel adaptive-gain controller for game-theoretic evolutionary dynamics~\cite{Smith1973,Sandholm2010}. In particular, we focus on the broad class of symmetric 2-player matrix games~\cite{riehl2018survey} and we consider a population where each individual plays the 2-player game against all the others, yielding a population game~\cite{Sandholm2010}. The emergent behavior of the action revision processes is captured at the population-level by a replicator equation~\cite{taylor1978replicator}, which is one of the most widely used and studied revision protocols in evolutionary game theory~\cite{Sandholm2010,Cressman2014}. In such a framework, the state of the population (i.e., the fraction of adopters of each action) is modeled by way of a nonlinear ordinary differential equation (ODE)~\cite{taylor1978replicator,Hofbauer2009,Sandholm2010}. For such dynamics, it has been shown that the population converges to a Nash equilibrium (NE) of the game~\cite{Sandholm2010}. For instance, in the well-known prisoner's dilemma, the replicator equation leads the entire population to the Pareto-inefficient equilibrium in which all players  defect~\cite{Cressman2014}. Adaptive-gain control~\cite{ioannou1996robust} has been used to stabilize complex network systems~\cite{yu2012distributed_adaptive,mei2016adaptive}, with applications to epidemics~\cite{Walsh2023_IFAC}, but to our knowledge, has not been used to control evolutionary dynamics.

Here, we encapsulate a feedback control scheme in the revision protocol to steer the population to a desired equilibrium. In particular, we consider  scenarios in which the desired equilibrium is either unstable or not globally attractive for the uncontrolled dynamics. To this aim, we propose a class of adaptive-gain controllers and we incorporate the adaptive gain within the payoff matrix. We tailor the adaptive-gain controllers to the three different classes of $2$-action matrix games ---coordination, dominant-strategy, and anti-coordination games~\cite{riehl2018survey,zino2017cdc} by suitably designing their adaptation functions, and we analytically study the corresponding nonlinear system of ODEs to assess their performance. In particular, we prove that an appropriately designed adaptive-gain controller is able i) lead the population to the desired equilibrium in coordination games, ii) promote  cooperation in social dilemmas, and iii) enforce consensus in anti-coordination games. Moreover, we show that in all these scenarios the gains converge to a constant value, which is key for the real-world applicability of our control scheme. Importantly, our adaptive-gain controller does not require full information on the payoff matrix to achieve its goal, addressing the limitations of other methods mentioned above.


\section{Model and Preliminaries}\label{sec:model}

We start by gathering the notational conventions used in this paper. The set of real and nonnegative real numbers are $\mathbb R$ and $\mathbb R_{+}$, respectively. Given two positive integers $n$ and $m$, a matrix $\mat{A}\in\mathbb R^{n\times m}$ is denoted with bold capital font, with $A_{ij}$ denoting the generic $j$th entry of its $i$th row.

\subsection{$2$-player matrix game}

We consider a population where each individual plays a symmetric 2-player matrix game~\cite{riehl2018survey} against all the others, yielding a population game~\cite{Sandholm2010}. We start by presenting the formalism of such matrix game. In this game, each player can choose between two actions, termed action $1$ and action $2$, characterized by a payoff matrix 
\begin{equation}\label{eq:payoff}
    {\mat A}=\begin{bmatrix}  a&b\\c&d \end{bmatrix}\in\mathbb R^{2\times 2}.
\end{equation}
In other words, a player would receive payoff equal to $a$ or $b$ for selecting action $1$ against an opponent who plays $1$ or $2$, respectively; while they would receive $c$ or $d$ for selecting action $2$ against an opponent who plays $1$ or $2$, respectively. 

Following a standard classification for 2-player matrix games~\cite{zino2017cdc}, the payoff matrix in \eqref{eq:payoff} determines three different classes of games, summarized in the following.
\begin{proposition}\label{prop:nash}\rm The payoff matrix in \eqref{eq:payoff} determines three classes of games:
\begin{enumerate}
    \item If $d>b$ and $a>c$, the game is a \emph{coordination game}; it has two pure NE\footnote{A Nash equilibrium (NE) is a configuration of actions in which no player can increase their payoff by unilaterally changing their action. A NE is said to be pure if each player chooses a strategy and mixed if at least one player chooses a probability distribution over the strategies. } $(1,1)$ and $(2,2)$, and a mixed NE, where action $1$ is played with probability equal to
    \begin{equation}\label{eq:mixed}
   x^*:=\frac{d-b}{a+d-b-c}. 
\end{equation}
    \item If $d>b$ and $a<c$, or $d<b$ and $a>c$, the game is a \emph{dominant-strategy game}, which has the unique (pure) NE: i) $(2,2)$ if $d>b$ and $a<c$; or ii) $(1,1)$ if $d<b$ and $a>c$.
    \item If $d<b$ and $a<c$, the game is an \emph{anti-coordination game}, which has two pure NE ($(1,2)$ and $(2,1)$), and a mixed NE where action $1$ is played with probability  $x^*$.
\end{enumerate}
\end{proposition}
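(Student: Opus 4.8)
The plan is to prove the classification by a direct best-response analysis, exploiting the symmetry of the game. For a player facing an opponent committed to action $1$, selecting action $1$ yields payoff $a$ while selecting action $2$ yields $c$, so action $1$ is the (strict) best response precisely when $a>c$ and action $2$ when $a<c$; similarly, against an opponent playing action $2$, action $1$ yields $b$ and action $2$ yields $d$, so action $2$ is the best response when $d>b$ and action $1$ when $d<b$. Because the game is symmetric, the same best-response rule applies to both players, which reduces the task to comparing the two sign conditions $a\gtrless c$ and $d\gtrless b$ that define the three classes.

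Next I would classify the pure NE by checking, in each case, which of the four action profiles $(1,1),(2,2),(1,2),(2,1)$ are mutual best responses. In a coordination game ($a>c$, $d>b$) the best response to $1$ is $1$ and the best response to $2$ is $2$, so $(1,1)$ and $(2,2)$ are NE while $(1,2),(2,1)$ are not. In a dominant-strategy game one action is a best response regardless of the opponent: if $d>b$ and $a<c$ then action $2$ dominates, yielding the unique pure NE $(2,2)$, and symmetrically $d<b$, $a>c$ gives $(1,1)$. In an anti-coordination game ($a<c$, $d<b$) the best response to $1$ is $2$ and to $2$ is $1$, so the mutual best responses are exactly $(1,2)$ and $(2,1)$.

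For the mixed NE I would impose the indifference condition: writing $x$ for the probability the opponent plays action $1$, a player is indifferent between its two actions when $ax+b(1-x)=cx+d(1-x)$, which rearranges to $x(a+d-b-c)=d-b$ and hence to the value $x^*$ in \eqref{eq:mixed}. It then remains to verify admissibility, i.e.\ $x^*\in(0,1)$: writing the denominator as $(a-c)+(d-b)$, in both the coordination case (numerator and both bracketed terms positive) and the anti-coordination case (all negative) one checks $0<x^*<1$, so the mixed NE exists; in the dominant-strategy case the two bracketed terms have opposite signs and the same check gives $x^*\notin(0,1)$, confirming that no interior mixed equilibrium exists and hence that the pure NE identified above is unique.

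The routine best-response bookkeeping is straightforward; the one step that requires genuine care, and which I expect to be the main obstacle, is the admissibility argument for $x^*$ ---in particular, handling the sign of the denominator consistently so that the inequalities $0<x^*<1$ are established in the coordination and anti-coordination cases while being shown to fail in the dominant-strategy case, thereby simultaneously yielding existence of the mixed NE where claimed and uniqueness of the pure NE otherwise.
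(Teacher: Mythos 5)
Your proposal is correct, but there is no proof in the paper to compare it against: the paper states this proposition without proof, presenting it as a standard classification taken from the literature~\cite{zino2017cdc}. Your argument ---pure-profile bookkeeping via best responses, plus the indifference condition for the mixed equilibrium--- is the standard one and establishes every claim in the statement. One small point to tighten: in the dominant-strategy case the denominator $a+d-b-c=(a-c)+(d-b)$ is a sum of two terms of opposite signs and may vanish, in which case $x^*$ is undefined rather than ``not in $(0,1)$.'' The clean way to phrase the admissibility step there is that the indifference equation $x(a+d-b-c)=d-b$ has no solution in $[0,1]$ (if the denominator is zero, the left side vanishes while the right side does not; otherwise your sign check applies), or, even more simply, that a strictly dominant action is a strict best response to every mixed strategy of the opponent, so indifference ---and hence any mixed NE--- is impossible, which also delivers uniqueness of the pure NE without touching $x^*$ at all.
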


To illustrate these three classes of games, we provide one example for each class, which will be used later in the paper. 

\begin{example}[Pure coordination game]\label{ex:pure}
A game with diagonal and strictly positive payoff matrix ($b=c=0$, $a,d>0$) is a particular type of coordination game, called {pure coordination game}, in which a player receives a (positive) payoff only if they coordinate with their opponent. 
\end{example}

\begin{example}[Prisoner's dilemma]\label{ex:pd}
A game with $c<a<d<b$ is a prisoner's dilemma in which action $1$ and $2$ represent defection and cooperation, respectively. A player receives $d$ for mutual cooperation, while $b>d$ captures the temptation to cheat (i.e, to defect if the other cooperates); $a$ is the punishment for mutual defection, which provides a smaller payoff than mutual cooperation ($d$), but larger than cooperating if the other defects ($c$). Prisoner's dilemma is a dominant-strategy game, with mutual defection (action $1$) as the unique NE. This is somewhat counter-intuitive, since the NE is not socially optimal: the players would receive a larger joint payoff if they both cooperate. 
\end{example}

\begin{example}[Minority game]\label{ex:min}
A minority game is a game with zero-diagonal and strictly positive off-diagonal payoff matrix ($a=d=0$, $b,c>0$). This is an anti-coordination game in which a player receives a (positive) payoff only if they do not coordinate with their opponent. 
\end{example}

\subsection{Evolutionary dynamics}

We consider a continuum of players, each one playing the 2-action game in \eqref{eq:payoff} against the entire population. Let us denote by $x(t)\in[0,1]$ the fraction of adopters of action $1$ at time $t\in\mathbb R$; consequently, the fraction of adopters of action $2$ is equal to $1-x(t)$. Then, following the standard approach of population games~\cite{Sandholm2010}, the total reward  associated with action $1$ and $2$ is given by the average payoff that a player receives for choosing action $1$ and $2$ from all games played, respectively, which is equal to
\begin{equation}
   \begin{bmatrix}  r_1(x,{\mat A})\\ r_2(x,{\mat A}) \end{bmatrix}={\mat A}\begin{bmatrix}  x\\1-x \end{bmatrix}= \begin{bmatrix}  ax+b(1-x)\\ cx+d(1-x)\end{bmatrix}.
\end{equation}
For the sake of readability, we have omitted to explicitly write the dependence of the variables on time, i.e., to use the notation $x(t)$. Throughout this paper, we adopt this convention except when we wish to  highlight the dependence.

In population games~\cite{Sandholm2010}, the players' actions revision is captured at the population-level by means of ODEs. Here, we use the replicator equation~\cite{taylor1978replicator}, yielding:
\begin{equation}\label{eq:replicator}
\begin{array}{lll}
    \dot x&=&x(1-x)(r_1(x,{\mat A})-r_2(x,{\mat A}))\\&=&x(1-x)((a+d-b-c)x+b-d),\end{array}
    \end{equation}
with initial condition $x(0)\in[0,1]$. Briefly, \eqref{eq:replicator} captures the tendency of players to imitate their peers who have higher payoff: the rate at which individuals switch action from $2$ to $1$ is proportional to the difference in the reward for playing $2$ with respect to the reward for playing $1$. 

The (uncontrolled) replicator equation in \eqref{eq:replicator} has been extensively studied in the literature~\cite{taylor1978replicator,Sandholm2010}, and we can fully characterize it through the following result. For its proof, omitted due to space constraints, we refer to~\cite{taylor1978replicator,Sandholm2010}. 
\begin{proposition}\label{prop:behavior}\rm
Consider the replicator equation in \eqref{eq:replicator}. If the payoff matrix ${\mat A}$ in \eqref{eq:payoff} is
\begin{enumerate}
    \item a coordination game ($d>b$, $a>c$), then $x(t)\to 0$ if $x(0)<x^*$, and $x(t)\to 1$ if $x(0)>x^*$;
    \item a dominant-strategy game, and
    \begin{enumerate}
        \item  
 $d>b$ and $a<c$, then $x(t)\to 0$ for any $x(0)<1$;
    \item $d<b$ and $a>c$, then $x(t)\to 1$ for any $x(0)>0$;
       \end{enumerate}
    \item  an anti-coordination game ($d<b$, $a<c$), then $x(t)\to x^*$ for any $x(0)\in(0,1)$,
\end{enumerate}
where $x^*$ is defined in \eqref{eq:mixed}.
\end{proposition}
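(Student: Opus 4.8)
The plan is to analyze \eqref{eq:replicator} as a scalar autonomous ODE on the invariant interval $[0,1]$ by means of a phase-line (sign) argument. First I would record that $x=0$ and $x=1$ are equilibria of \eqref{eq:replicator} for every payoff matrix, since the factor $x(1-x)$ vanishes there; consequently $[0,1]$ is forward invariant and any solution with $x(0)\in[0,1]$ remains in $[0,1]$ for all $t$. Writing $\gamma:=a+d-b-c$, the bracketed term in \eqref{eq:replicator} is the affine function $\gamma x + b - d$, which, whenever $\gamma\neq 0$, vanishes precisely at the $x^*$ defined in \eqref{eq:mixed}; hence for $\gamma\neq 0$ one can factor $\dot x=\gamma\,x(1-x)(x-x^*)$.

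Next I would run the three cases by determining the sign of $\gamma$ and the location of the interior root. In a coordination game $a>c$ and $d>b$ give $\gamma=(a-c)+(d-b)>0$ and, from the signs of numerator and denominator, $x^*\in(0,1)$; the factorization then yields $\dot x<0$ on $(0,x^*)$ and $\dot x>0$ on $(x^*,1)$, so $x^*$ repels while $0$ and $1$ attract. In an anti-coordination game $a<c$ and $d<b$ give $\gamma<0$ and again $x^*\in(0,1)$, and the same factorization reverses the signs, so $\dot x>0$ on $(0,x^*)$ and $\dot x<0$ on $(x^*,1)$, making $x^*$ attracting from both sides. For the dominant-strategy cases I would argue directly from the affine factor $\gamma x + b - d$ rather than from $x^*$: evaluating it at the endpoints gives the value $b-d$ at $x=0$ and $a-c$ at $x=1$. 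In case 2a both are negative, so by linearity $\gamma x+b-d<0$ throughout $[0,1]$ and $\dot x<0$ on $(0,1)$; in case 2b both are positive, so $\dot x>0$ on $(0,1)$.

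To turn these sign statements into the claimed limits I would invoke the standard scalar-ODE fact that on any open interval between consecutive equilibria the solution is strictly monotone and bounded, hence converges to the equilibrium in the direction of motion, and that a one-dimensional flow admits no oscillatory or periodic behavior. This immediately gives $x(t)\to x^*$ from either side in the anti-coordination case; $x(t)\to 0$ (resp.\ $1$) throughout $(0,1)$ in case 2a (resp.\ 2b); and, in the coordination case, $x(t)\to 0$ when $x(0)<x^*$ and $x(t)\to 1$ when $x(0)>x^*$, the repelling root $x^*$ partitioning $(0,1)$ into the two basins.

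The computations are routine, so the only points requiring genuine care are two. First, in the dominant-strategy case the combination $\gamma=0$ is admissible (e.g.\ $a-c=-(d-b)\neq 0$), which makes $x^*$ ill-defined; this is exactly why I would treat cases 2a--2b through the endpoint values of the affine factor, an argument valid for $\gamma=0$ as well, where $\dot x=x(1-x)(b-d)$ is already sign-definite. Second, I would be careful to upgrade ``sign of $\dot x$'' to genuine convergence by the monotone-bounded argument above, and to exclude the repelling interior equilibrium as a limit for generic initial conditions, which follows because $\dot x$ pushes trajectories away from $x^*$ on both sides.
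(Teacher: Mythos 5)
Your proof is correct, but note that the paper itself does not prove this proposition: it explicitly omits the proof and defers to the cited literature (Taylor--Jonker and Sandholm), so there is no in-paper argument to compare against. What you supply is the standard self-contained phase-line analysis: factor the right-hand side as $\gamma\,x(1-x)(x-x^*)$ with $\gamma=a+d-b-c$, read off the sign of $\dot x$ between the equilibria $0$, $x^*$, $1$, and invoke the monotone-bounded-convergence fact for scalar autonomous ODEs (a bounded monotone solution of a locally Lipschitz scalar ODE converges to a zero of the vector field). All three cases check out: in the coordination case $\gamma>0$ and $x^*\in(0,1)$ is a repeller separating the two basins; in the anti-coordination case $\gamma<0$ and $x^*$ attracts from both sides; and your decision to handle the dominant-strategy cases via the endpoint signs $b-d$ and $a-c$ of the affine factor, rather than via $x^*$, is exactly the right move, since $\gamma=0$ is admissible there (e.g.\ $a-c=-(d-b)$) and $x^*$ in \eqref{eq:mixed} is then undefined --- a degeneracy that citations-to-the-literature treatments sometimes gloss over. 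The one hypothesis you use implicitly and could state once is uniqueness of solutions (the right-hand side is polynomial, hence locally Lipschitz), which is what prevents trajectories from reaching or crossing the equilibria in finite time and makes $[0,1]$ and the subintervals between equilibria genuinely invariant.
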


\begin{remark}
The behavior of \eqref{eq:replicator} can be related to the Nash equilibria of the game and to the concept of evolutionary stability~\cite{Smith1973}. In fact, all Nash equilibria of \eqref{eq:payoff} are associated with equilibria of \eqref{eq:replicator}. However, \eqref{eq:replicator} may have some equilibria that are not Nash ---e.g., $x=1$ in scenario 2). Moreover, the equilibria that are asymptotically stable coincide with the evolutionarily stable strategies of the game. Hence, the replicator equation converges almost everywhere to an evolutionarily stable strategy~\cite{Smith1973,Sandholm2010}.
\end{remark}

Figure~\ref{fig:1} illustrates Proposition~\ref{prop:behavior}. In particular, it depicts the trajectories for the (uncontrolled) replicator equation in three representative examples of the classes of games described above. In the pure coordination game (Fig.~\ref{fig:1a}), the population converges to a pure configuration that depends on the initial condition; in the prisoner's dilemma, we observe convergence to the all-defector configuration (Fig.~\ref{fig:1b}); in the minority game (Fig.~\ref{fig:1c}), the population converges to the unique (mixed) NE.

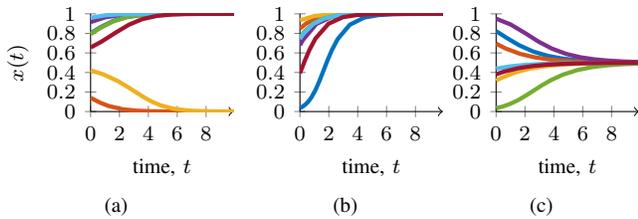
\begin{figure}
    \centering
    \subfloat[]{
%
%
\definecolor{mycolor1}{rgb}{0.00000,0.44700,0.74100}%
\definecolor{mycolor2}{rgb}{0.85000,0.32500,0.09800}%
\definecolor{mycolor3}{rgb}{0.92900,0.69400,0.12500}%
\definecolor{mycolor4}{rgb}{0.49400,0.18400,0.55600}%
\definecolor{mycolor5}{rgb}{0.46600,0.67400,0.18800}%
\definecolor{mycolor6}{rgb}{0.30100,0.74500,0.93300}%
\definecolor{mycolor7}{rgb}{0.63500,0.07800,0.18400}%
\begin{tikzpicture}

\begin{axis}[%
 axis lines=left,
 x   axis line style={->},
  y   axis line style={-},
  width=\ll cm,
height=\hh cm,
at={(0.758in,0.481in)},
scale only axis,
xmin=0,
xmax=9.9,
ylabel={$x(t)$},
xlabel={time, $t$},
ymin=0,
ymax=1,
axis background/.style={fill=white},legend style={at={(1,1)}, legend image post style={xscale=0.6},legend cell align=left, align=left, draw=none,fill=none},
]
\addplot [color=mycolor1, ultra thick, forget plot]
  table[row sep=crcr]{%
0	0.8002804688888\\
0.66697931935524	0.861812581749114\\
1.54205847824255	0.925911871575676\\
2.54205847824255	0.969225832504406\\
3.54205847824255	0.988657366520282\\
4.54205847824255	0.996068553612531\\
5.54205847824255	0.998671258910478\\
6.54205847824255	0.999554993436979\\
7.54205847824255	0.999851429457981\\
8.54205847824255	0.999950450279301\\
9.54205847824255	0.999983480511108\\
10	0.999989577526743\\
};

\addplot [color=mycolor2, ultra thick, forget plot]
  table[row sep=crcr]{%
0	0.141886338627215\\
0.130165019080834	0.130743837424712\\
0.650352121397383	0.0911136788646699\\
1.23757028694723	0.057186239744598\\
2.23757028694723	0.0227224458368937\\
3.23757028694723	0.00817047420316143\\
4.23757028694723	0.00280192621966711\\
5.23757028694723	0.000943263512602652\\
6.23757028694723	0.000315476408343737\\
7.23757028694723	0.000105276939022277\\
8.23757028694722	3.51054724829208e-05\\
9.23757028694722	1.17032875599185e-05\\
10	5.31755403165288e-06\\
};

\addplot [color=mycolor3, ultra thick, forget plot]
  table[row sep=crcr]{%
0	0.421761282626275\\
0.884160485395348	0.380418593606555\\
1.88416048539535	0.311981918442912\\
2.61290102750154	0.24778154820227\\
3.19107259852911	0.192470837001101\\
3.76924416955669	0.13932457837781\\
4.30156916954676	0.0972448080446552\\
4.86920236737188	0.0625600641383618\\
5.75368522176967	0.0284470198221932\\
6.69923080973803	0.0111361776615062\\
6.95139327272919	0.00871717701339644\\
7.20355573572035	0.00681260538706116\\
7.5418776674749	0.00488263036397817\\
7.88918416760253	0.00346233032774615\\
8.23398176348226	0.00245823621503149\\
8.57547817517893	0.00174956699456883\\
8.91442132578745	0.00124761623724756\\
9.25154446255431	0.000890914261843869\\
9.58738628944761	0.000636819102338727\\
9.93547182500404	0.00044951157732864\\
10	0.000421456652158362\\
};

\addplot [color=mycolor4, ultra thick, forget plot]
  table[row sep=crcr]{%
0	0.915735525189067\\
1	0.964088829220153\\
2	0.986560155159742\\
3	0.995309192709405\\
4	0.99841042959887\\
5	0.999467149076774\\
6	0.99982204609989\\
7	0.999940644437036\\
8	0.999980210629019\\
9	0.999993403077971\\
10	0.999997800974311\\
};

\addplot [color=mycolor5, ultra thick, forget plot]
  table[row sep=crcr]{%
0	0.792207329559554\\
0.658777354048206	0.853764637769752\\
1.52380256646239	0.919709813294973\\
2.52380256646239	0.966121044610316\\
3.52380256646239	0.987396599645131\\
4.5238025664624	0.995613250096272\\
5.5238025664624	0.998515030476591\\
6.5238025664624	0.999502396760794\\
7.5238025664624	0.999833838414475\\
8.5238025664624	0.999944580025669\\
9.5238025664624	0.999981523028239\\
10	0.999988559043964\\
};

\addplot [color=mycolor6, ultra thick, forget plot]
  table[row sep=crcr]{%
0	0.959492426392903\\
1	0.984635754805855\\
2	0.99460364834718\\
3	0.998166880278115\\
4	0.999384979467399\\
5	0.999794544354507\\
6	0.999931464671913\\
7	0.999977149313356\\
8	0.999992382484413\\
9	0.999997460759232\\
10	0.999999153578754\\
};

\addplot [color=mycolor7, ultra thick, forget plot]
  table[row sep=crcr]{%
0	0.655740699156587\\
0.746057308083157	0.714872893970097\\
1.74605730808316	0.808672507520108\\
2.64976981795966	0.888368874233886\\
3.56865407630857	0.945367724460069\\
4.56865407630857	0.978442211340199\\
5.56865407630857	0.992276617785328\\
6.56865407630857	0.997355414082496\\
7.56865407630857	0.999110195386989\\
8.56865407630857	0.999702459378267\\
9.56865407630857	0.999900714704939\\
10	0.999935625616407\\
};

\end{axis}
\end{tikzpicture}
    \subfloat[]{
%
%
\definecolor{mycolor1}{rgb}{0.00000,0.44700,0.74100}%
\definecolor{mycolor2}{rgb}{0.85000,0.32500,0.09800}%
\definecolor{mycolor3}{rgb}{0.92900,0.69400,0.12500}%
\definecolor{mycolor4}{rgb}{0.49400,0.18400,0.55600}%
\definecolor{mycolor5}{rgb}{0.46600,0.67400,0.18800}%
\definecolor{mycolor6}{rgb}{0.30100,0.74500,0.93300}%
\definecolor{mycolor7}{rgb}{0.63500,0.07800,0.18400}%
\begin{tikzpicture}

\begin{axis}[%
 axis lines=left,
 x   axis line style={->},
  y   axis line style={-},
  width=\ll cm,
height=\hh cm,
at={(0.758in,0.481in)},
scale only axis,
xmin=0,
xmax=9.9,
xlabel={time, $t$},
ymin=0,
ymax=1,
]
\addplot [color=mycolor1, ultra thick, forget plot]
  table[row sep=crcr]{%
0	0.0357116785741896\\
0.0422355199853334	0.0386789430015311\\
0.209750101726733	0.0528570191584411\\
0.384833485627049	0.072637502539853\\
0.573654528949673	0.101035670619203\\
0.787292010402924	0.143720026654592\\
1.06003222868086	0.216148166468395\\
1.34550658738262	0.311657628676302\\
1.63098094608437	0.41929682209038\\
1.9019913133276	0.522149928439287\\
2.20922820573686	0.628547899693076\\
2.6136465634553	0.742141366996517\\
3.45045538754255	0.888210264598034\\
4.45045538754255	0.962277227300842\\
5.45045538754255	0.987408000127283\\
6.45045538754255	0.995802006432147\\
7.45045538754255	0.998600644342944\\
8.45045538754255	0.999533547208047\\
9.45045538754255	0.99984451570245\\
10	0.999910783936159\\
};

\addplot [color=mycolor2, ultra thick, forget plot]
  table[row sep=crcr]{%
0	0.849129305868777\\
0.460742810594324	0.904388921495328\\
1.15981626090621	0.953100377320353\\
2.11729812665396	0.983336945414606\\
3.11729812665396	0.994444118501388\\
4.11729812665395	0.998147982781865\\
5.11729812665395	0.999382658826381\\
6.11729812665395	0.999794219530981\\
7.11729812665395	0.999931406507445\\
8.11729812665395	0.999977135502375\\
9.11729812665395	0.999992378500788\\
10	0.999997010456432\\
};

\addplot [color=mycolor3, ultra thick, forget plot]
  table[row sep=crcr]{%
0	0.933993247757551\\
1	0.97790281913543\\
2	0.992630705337654\\
3	0.997543436091048\\
4	0.999181140460754\\
5	0.999727046638658\\
6	0.999909015539494\\
7	0.999969671846249\\
8	0.999989890615407\\
9	0.999996630205135\\
10	0.999998876735045\\
};

\addplot [color=mycolor4, ultra thick, forget plot]
  table[row sep=crcr]{%
0	0.678735154857773\\
0.188467687695932	0.729505137437449\\
1.13080612617559	0.896096283870895\\
2.13080612617559	0.964996181132461\\
3.13080612617559	0.98831788340655\\
4.13080612617559	0.996105433892472\\
5.13080612617559	0.998701791760988\\
6.13080612617559	0.999567263196718\\
7.13080612617559	0.999855754372105\\
8.13080612617559	0.999951918123043\\
9.13080612617559	0.999983972707644\\
10	0.999993603356688\\
};

\addplot [color=mycolor5, ultra thick, forget plot]
  table[row sep=crcr]{%
0	0.757740130578333\\
0.265825140692533	0.812029177355859\\
0.901709159045406	0.900085548737577\\
1.6698635748007	0.954652591998061\\
2.6698635748007	0.984853384236044\\
3.6698635748007	0.994949978932695\\
4.6698635748007	0.998316617050356\\
5.6698635748007	0.999438870772432\\
6.6698635748007	0.999812956865711\\
7.6698635748007	0.999937652286406\\
8.6698635748007	0.999979217428722\\
9.6698635748007	0.999993072476238\\
10	0.99999502353244\\
};

\addplot [color=mycolor6, ultra thick, forget plot]
  table[row sep=crcr]{%
0	0.743132468124916\\
0.247794274131303	0.796894399031665\\
0.90001504237442	0.893611241552819\\
1.67581069523124	0.952103662243124\\
2.67581069523124	0.983998251342196\\
3.67581069523124	0.994664728823485\\
4.67581069523124	0.998221526047578\\
5.67581069523124	0.999407173488758\\
6.67581069523124	0.999802391094014\\
7.67581069523124	0.999934130362119\\
8.67581069523124	0.999978043453945\\
9.67581069523124	0.999992681151312\\
10	0.99999471080417\\
};

\addplot [color=mycolor7, ultra thick, forget plot]
  table[row sep=crcr]{%
0	0.392227019534168\\
0.081869826921455	0.423695768636982\\
0.3530000466254	0.526400684736166\\
0.662721595147253	0.63302421708386\\
1.07363883916741	0.74705673625535\\
1.97740047224736	0.898819848378565\\
2.97740047224736	0.96593223585934\\
3.97740047224736	0.988631008473108\\
4.97740047224736	0.99620985018322\\
5.97740047224736	0.998736598720844\\
6.97740047224736	0.999578865573327\\
7.97740047224736	0.999859621833073\\
8.97740047224736	0.999953207276776\\
10	0.999984931262298\\
};

\end{axis}
\end{tikzpicture}
    \subfloat[]{
%
%
\definecolor{mycolor1}{rgb}{0.00000,0.44700,0.74100}%
\definecolor{mycolor2}{rgb}{0.85000,0.32500,0.09800}%
\definecolor{mycolor3}{rgb}{0.92900,0.69400,0.12500}%
\definecolor{mycolor4}{rgb}{0.49400,0.18400,0.55600}%
\definecolor{mycolor5}{rgb}{0.46600,0.67400,0.18800}%
\definecolor{mycolor6}{rgb}{0.30100,0.74500,0.93300}%
\definecolor{mycolor7}{rgb}{0.63500,0.07800,0.18400}%
\begin{tikzpicture}

\begin{axis}[%
 axis lines=left,
 x   axis line style={->},
  y   axis line style={-},
  width=\ll cm,
height=\hh cm,
at={(0.758in,0.481in)},
scale only axis,
xmin=0,
xmax=9.9,
xlabel={time, $t$},
ymin=0,
ymax=1,
axis background/.style={fill=white},legend style={at={(1,1)}, legend image post style={xscale=0.6},legend cell align=left, align=left, draw=none,fill=none},
]
\addplot [color=mycolor1, ultra thick, forget plot]
  table[row sep=crcr]{%
0	0.823457828327293\\
0.700476968423483	0.756526675227171\\
1.56572219338037	0.680783592155835\\
2.56572219338037	0.614344461419052\\
3.56572219338037	0.570345724556662\\
4.56572219338037	0.542791711385093\\
5.56572219338037	0.525918634964957\\
6.56572219338037	0.515673668726766\\
7.56572219338037	0.509472711816514\\
8.56572219338037	0.505723803818136\\
9.56572219338037	0.503458287458194\\
10	0.502783009252472\\
};

\addplot [color=mycolor2, ultra thick, forget plot]
  table[row sep=crcr]{%
0	0.694828622975817\\
0.672765067195254	0.644638210738461\\
1.67276506719525	0.589966546881982\\
2.67276506719525	0.55496610543644\\
3.67276506719525	0.533347288662102\\
4.67276506719525	0.52017820391599\\
5.67276506719525	0.512197835034625\\
6.67276506719525	0.507371035176405\\
7.67276506719525	0.504453666890029\\
8.67276506719525	0.502690830560363\\
9.67276506719525	0.501625726336177\\
10	0.501380305227894\\
};

\addplot [color=mycolor3, ultra thick, forget plot]
  table[row sep=crcr]{%
0	0.317099480060861\\
0.320248871383846	0.341224592098179\\
1.32024887138385	0.40063961052743\\
2.32024887138385	0.439144524890608\\
3.32024887138385	0.46304490986194\\
4.32024887138385	0.477630923434184\\
5.32024887138385	0.486476034279999\\
6.32024887138385	0.491827212868633\\
7.32024887138385	0.495061820338059\\
8.32024887138385	0.497016416287275\\
9.32024887138385	0.498197396081761\\
10	0.498717733938833\\
};

\addplot [color=mycolor4, ultra thick, forget plot]
  table[row sep=crcr]{%
0	0.950222048838355\\
1	0.891525913138387\\
2	0.803550902888398\\
2.94061682739422	0.715435128912546\\
3.85553410178442	0.644580365033057\\
4.85553410178442	0.58992844687108\\
5.85553410178442	0.554942305413803\\
6.85553410178442	0.533332729110986\\
7.85553410178442	0.520169367053044\\
8.85553410178442	0.512192487115442\\
9.85553410178442	0.507367802156662\\
10	0.506854462519949\\
};

\addplot [color=mycolor5, ultra thick, forget plot]
  table[row sep=crcr]{%
0	0.0344460805029088\\
0.0889843173105263	0.0373015629456747\\
0.482068726110895	0.0525041152917884\\
0.916736691650515	0.0749184526320879\\
1.48767523729975	0.114074942078589\\
2.04597965620979	0.161803029224255\\
2.60428407511984	0.214738813171397\\
3.14955169108196	0.266227724272951\\
3.77622304274123	0.319691713358694\\
4.61762599634628	0.377038995955164\\
5.61762599634628	0.424136987657893\\
6.61762599634628	0.453800563392283\\
7.61762599634628	0.472005624812984\\
8.61762599634628	0.48306846743283\\
9.61762599634628	0.489766493735941\\
10	0.491547331738964\\
};

\addplot [color=mycolor6, ultra thick, forget plot]
  table[row sep=crcr]{%
0	0.438744359656398\\
1	0.462799394797505\\
2	0.477481747015127\\
3	0.486385719176614\\
4	0.491772605984296\\
5	0.495028819515017\\
6	0.496996476268349\\
7	0.498185348541088\\
8	0.498903643106484\\
9	0.499337616614021\\
10	0.499599809795902\\
};

\addplot [color=mycolor7, ultra thick, forget plot]
  table[row sep=crcr]{%
0	0.381558457093008\\
0.546081268979408	0.408790886554257\\
1.54608126897941	0.444257327090737\\
2.54608126897941	0.466177563235279\\
3.54608126897941	0.479533387257626\\
4.54608126897941	0.487627620504963\\
5.54608126897941	0.492523445124281\\
6.54608126897941	0.495482567111532\\
7.54608126897941	0.497270640947238\\
8.54608126897941	0.498350995326847\\
9.54608126897941	0.499003722613558\\
10	0.499206118332234\\
};

\end{axis}
\end{tikzpicture}
\caption{Trajectories of the (uncontrolled) replicator equation in \eqref{eq:replicator} for (a) a pure coordination game with $a=d=1$; (b) a prisoner's dilemma with $c=0$, $a=1$, $d=2$, and $b=3$; and (c) a minority game with $b=c=1$. }
    \label{fig:1}
\end{figure}

\section{Problem Formalization}\label{sec:problem}

Proposition~\ref{prop:behavior} identifies two problems of interest. First, in scenario 1), the uncontrolled dynamics has two locally asymptotically stable equilibria ($x=0$ and $x=1$) and the initial condition determines the equilibrium reached by the system. An important problem is to design control strategies to steer the system to a desired equilibrium, regardless of the initial condition. In the context of social change (where action~$1$ and $2$ represent the status quo and the innovation, respectively), the problem translates into designing  intervention policies to incentivize social change when the innovators are initially a small minority~\cite{bass1969model}.

Second, in scenarios 2) or 3), the uncontrolled system has a unique globally asymptotically stable equilibrium, which is the unique NE of the game ---either a pure NE 2), or a mixed NE in 3). For these two scenarios, one problem of interest is to steer the system to a different equilibrium, e.g., to promote cooperation in the prisoner's dilemma (Example~\ref{ex:pd})~\cite{May1981,Stella2022cooperation} or reach coordination in an anti-coordination game (Example~\ref{ex:min})~\cite{ramazi2016networks}. 

Here, we consider both problems of interest. Without loss of generality, we focus on the case in which we aim to steer the system to the equilibrium $x=0$, i.e., all players collectively adopt action~$2$. Before formalizing our research problems, we want to point out that a naive intervention would be to permanently change the payoff matrix to shift the system to resemble scenario 2a) of Proposition~\ref{prop:behavior}. However, such a structural intervention has several limitations. First, in many real-world scenarios it may not be feasible, since permanent changes to the payoff structure may be economically unsustainable and unnecessary. In fact, in some scenarios (in particular, for coordination games), it may not be necessary to permanently change the payoff, since interventions might be needed only in a transient phase. In other scenarios, permanent changes could be necessary, but their implementation may not be required throughout the process, and can be limited to some specific conditions in which it is needed. Second, the aforementioned approaches would require  knowledge of the exact values in the payoff matrix. For these reasons, we focus here on designing more refined and dynamic control schemes.

Specifically, we assume that one or multiple entries of the payoff matrix ${\mat A}$ are controllable through the addition of a gain, i.e.,  that the payoff matrix can be written as
    \begin{equation}\label{eq:payoff_control}
    {\mat A}(t)= \mat{\hat A}+{\mat G}g(t)=\begin{bmatrix}   a & b\\ c& d \end{bmatrix}+\begin{bmatrix}   G_{11}& G_{12}\\G_{21}& G_{22} \end{bmatrix}g(t),
\end{equation}
where $\mat{\hat A}$ is the \emph{nominal payoff matrix} of the uncontrolled game from \eqref{eq:payoff}, $g(t):\mathbb R_+\to\mathbb R_+$ is a continuous function that quantifies the control \emph{gain}, and $\mat G\in\{0,1\}^{2\times2}$ is a $2\times 2$ matrix that determines which entries of the payoff function are controlled via the gain. We term $\mat G$ the \emph{control matrix}. In practical terms, the gains capture some temporary changes to the structure of the payoff matrix, which may be designed in order to achieve the desired equilibrium.     
    
We want to design control gains in an adaptive fashion, to obtain a closed-loop controller for the evolutionary dynamics that requires limited information on the game. Hence, we assume that the gain $g(t)$ is governed by the following ODE:
\begin{equation}\label{eq:gain}
  \dot g(t)=\phi(x(t))g(t),
\end{equation}
where the \emph{adaptation function} $\phi(x):[0,1]\to \mathbb R$ is continuously differentiable on its domain, with initial condition $g(0)>0$. Hence, an adaptive-gain controller can be characterized by the pair $(\mat{G},\phi)$, yielding the following planar system of coupled nonlinear ODEs:
\begin{equation}\label{eq:controlled_replicator}
\begin{array}{lll}  
    \dot x&=&x(1-x)\big((a+d-b-c)x+b-d\\&&+(G_{11}-G_{21})gx+(G_{12}-G_{22})g(1-x)\big),\\
    \dot g&=&\phi(x)g,
    \end{array}
\end{equation}
for which we can prove the following property.

\begin{lemma}\label{lemma:invariance}
The domain $\Omega:=[0,1]\times \mathbb R_+$ is positively invariant under \eqref{eq:controlled_replicator}.
\end{lemma}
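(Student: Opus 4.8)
The plan is to exploit the product structure of the vector field on the faces of the rectangle $\Omega=[0,1]\times\mathbb R_+$, showing that each face of $\partial\Omega$ is invariant so that no trajectory starting inside can cross it. First I would record the regularity needed: the right-hand side of \eqref{eq:controlled_replicator} is polynomial in $(x,g)$ in the $x$-equation and equals $\phi(x)g$ in the $g$-equation, with $\phi\in C^1([0,1])$ by assumption. Extending $\phi$ to a $C^1$ function on a neighborhood of $[0,1]$, the field is locally Lipschitz on a neighborhood of $\Omega$, so by the Picard--Lindel\"of theorem there is a unique maximal solution through every initial point, defined on some interval $[0,T_{\max})$.

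Next I would treat the $g$-component. Regarding $x(\cdot)$ as a given continuous input, the equation $\dot g=\phi(x)g$ is linear in $g$ and integrates explicitly to
\[
g(t)=g(0)\exp\!\left(\int_0^t \phi(x(s))\,ds\right).
\]
Since $g(0)>0$, this gives $g(t)>0$ for every $t\in[0,T_{\max})$, so the $g$-component never leaves $\mathbb R_+$; moreover, as soon as $x(s)\in[0,1]$ the continuous function $\phi$ is bounded on the compact $[0,1]$, so the exponent is finite on any bounded interval. The face $\{g=0\}$ is likewise invariant because $\dot g=0$ there, consistent with this formula.

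For the $x$-component, the key observation is that the factor $x(1-x)$ multiplies the \emph{entire} right-hand side of $\dot x$; hence $\dot x=0$ whenever $x\in\{0,1\}$, \emph{independently} of the value of $g$. Thus the lines $\{x=0\}$ and $\{x=1\}$ consist of equilibria of the $x$-dynamics and are invariant. By uniqueness of solutions, a trajectory with $x(0)\in[0,1]$ can neither reach $\{x=0\}$ nor $\{x=1\}$ from the interior --- were it to touch such a point it would have to coincide there with the constant solution lying on the face, a contradiction --- so $x(t)\in[0,1]$ for all $t\in[0,T_{\max})$.

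Finally I would close the continuation argument to obtain global invariance on $[0,\infty)$. The only real subtlety is an apparent circularity between the two bounds: confining $x$ to $[0,1]$ is what bounds $\phi(x)$ and thereby rules out blow-up of $g$, while boundedness of $g$ is what keeps $\dot x$ finite. This is resolved by ordering the arguments on $[0,T_{\max})$: the invariance of the faces $\{x=0\},\{x=1\}$ forces $x(t)\in[0,1]$ throughout, \emph{with no reference to} $g$; this bounds $\phi(x)$ by $\max_{[0,1]}|\phi|$, which bounds $g$ on every bounded subinterval, which in turn bounds $\dot x$. Since neither component escapes to the boundary of the extended state space in finite time, the standard continuation theorem yields $T_{\max}=\infty$. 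Therefore $(x(t),g(t))\in\Omega$ for all $t\ge0$, proving that $\Omega$ is positively invariant. I expect this continuation/no-blow-up step to be the main (if mild) obstacle; an alternative, avoiding uniqueness, would be to verify Nagumo's subtangentiality condition on $\partial\Omega$, which the vanishing of the normal velocity component on each face makes immediate.
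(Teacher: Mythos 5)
Your proof is correct and rests on the same core observation as the paper's: the multiplicative structure of \eqref{eq:controlled_replicator} (the factor $g$ on the right-hand side of \eqref{eq:gain} and the factor $x(1-x)$ in $\dot x$) makes every face of $\partial\Omega$ invariant, so trajectories cannot exit. The paper compresses this into a single sentence that treats only the $g$-component by appealing to set-invariance theory~\cite{blanchini1999set}, leaving the $x$-component and global existence implicit as standard for replicator dynamics; your explicit-integration, uniqueness, and continuation arguments fill in exactly those omitted details, and your closing remark about Nagumo's subtangentiality condition is precisely the route the paper's citation points to.
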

\begin{proof}
The regularity of $\phi$ and the presence of the term $g$ on the right-hand side of \eqref{eq:gain} guarantees that $g(t)\in\mathbb R_+$ for all $t\in\mathbb R_+$~\cite{blanchini1999set}. 
\end{proof}
 
Besides guaranteeing convergence to the desired equilibrium $x=0$, it also makes sense for many practical applications to enforce that the gain converges to some constant value $\bar g$. 
An intuitive difference between the two problems of interest is that, for a coordination game the controller should be able to guide the system out of the basin of attraction of the undesired equilibrium and into the basin of the desired one; in the second problem the controller should enforce stability for an equilibrium that would otherwise be unstable. Consequently, while in the first problem it may be suitable to consider control policies that are only active in the transient, and for which $g(t)\to 0$, in the second scenario we may need the control to remain active in the steady state to ensure stability. Hence, we can now finally formulate our two research problems of interest.

\begin{problem}\label{pr:1}
    Design $(\mat{G},\phi)$ so that i) the system in \eqref{eq:replicator} for a coordination game converges to the desired equilibrium $x(t)\to 0$ and ii) the gain $g(t)\to 0$.
\end{problem}
    
\begin{problem}\label{pr:2}
    Design  $(\mat{G},\phi)$ so that i) the system in \eqref{eq:replicator} for a dominant-strategy game or an anti-coordination game converges to the desired equilibrium $x(t)\to 0$ and ii) the gain $g(t)\to \bar g$, for some constant $\bar g\in\mathbb R_+$.
\end{problem}


\section{Main Results}\label{sec:results}
\subsection{Coordination Games}\label{sec:coordination}

Let us consider a coordination game with payoff matrix from \eqref{eq:payoff_control} ($a>c$ and $d>b$). For the sake of readability, we define two constants $\alpha=a-c>0$ and $\beta=d-b>0$. Using this notation, the replicator equation in~\eqref{eq:replicator} reads 
\begin{equation}\label{eq:coordination}
    \dot x=x(1-x)\big((\alpha+\beta)x-\beta\big).
\end{equation}
As a consequence of Proposition~\ref{prop:behavior}, the trajectories of the uncontrolled dynamics in~\eqref{eq:coordination} converge to $x=0$ if $x(0)<x^*=\frac{\beta}{\alpha+\beta}$, and to $x=1$ if $x(0)>x^*$, with a saddle point equilibrium at $x^*$, as illustrated in Fig.~\ref{fig:1a}.

In order to solve Problem~\ref{pr:1} for any initial condition (especially when $x(0)>x^*)$, we observe that the gain dynamics should be such that $\dot g$ is strictly positive when $x$ is far from $0$, and negative in the proximity of $x=0$ so that $g(t)$ can converge to $0$. For this reason, we propose an adaptation function of the form 
\begin{equation}\label{eq:gain_dyn1}
    \phi^{(1)}(x)=k(x-h),
\end{equation}
with $k>0$ and $h>0$ that regulate the rate of adaptation and the rate of the decay of the gain, respectively.

Since our objective is to steer the population toward collective adoption of action $2$ and avoid reaching the other locally attractive equilibrium in which the entire population adopts action $1$, we will consider a control matrix that provides an adaptive advantage for choosing action $2$ against a player who plays action $1$, which can be interpreted as an incentive for innovators and early adopters of action~$2$. Hence, we adopt
\begin{equation}\label{eq:matrix1}
 \mat{G^{(1)}}=\begin{bmatrix}
    0&0\\1&0
    \end{bmatrix}.
\end{equation}


\begin{theorem}\label{th:1}\rm
The adaptive-gain control $(\mat{G^{(1)}},\phi^{(1)})$ solves Problem~\ref{pr:1} for any initial condition  $x(0)\in[0,1)$ and for any $k>0$ and $h\in(0,\frac{\beta}{\alpha+\beta})$.
\end{theorem}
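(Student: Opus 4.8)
The plan is to first substitute the proposed controller into \eqref{eq:controlled_replicator}. With $\mat{G^{(1)}}$ from \eqref{eq:matrix1} one has $G_{11}-G_{21}=-1$ and $G_{12}-G_{22}=0$, so, using $\alpha=a-c$ and $\beta=d-b$, the closed loop reads $\dot x=x(1-x)\big((\alpha+\beta)x-\beta-gx\big)$ and $\dot g=k(x-h)g$. I would then immediately reduce part~ii) to part~i): integrating the gain equation gives $g(t)=g(0)\exp\!\big(k\int_0^t (x(s)-h)\,ds\big)$, and since $h>0$, the convergence $x(t)\to 0$ forces $x(s)-h\le -h/2$ for all large $s$, so the integral diverges to $-\infty$ and $g(t)\to 0$. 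Thus the whole theorem follows once $x(t)\to 0$ is established for every $x(0)\in[0,1)$ and $g(0)>0$.

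Next I would set up the phase plane on $\Omega$. By Lemma~\ref{lemma:invariance} the set $\Omega$ is invariant, and since $x=0$ and $x=1$ are equilibrium lines of the $x$-equation, $x(t)\in[0,1)$ for all $t$. The core observation is that the strip $\mathcal C:=\{(x,g):0<x<h,\ g>0\}$ is forward invariant and attracts to the origin. Indeed, for every $g>0$ and $x\in(0,x^*]$ the bracket $(\alpha+\beta)x-\beta-gx$ is negative (the first two terms are $\le 0$ because $x\le x^*$, and $-gx<0$), so $\dot x<0$; in particular $\dot x<0$ on the line $x=h$, hence trajectories cross it only leftwards and never leave $\mathcal C$ to the right, while $g=0$ is invariant. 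Moreover, for $x<h$ one has $\dot g=k(x-h)g<0$. Therefore inside $\mathcal C$ both $x$ and $g$ are strictly decreasing and bounded below, so they converge, and the limit must be an equilibrium of \eqref{eq:controlled_replicator}; the only equilibrium in $\overline{\mathcal C}$ is $(0,0)$ (note $x^*>h$). Hence any trajectory that enters $\mathcal C$ satisfies $x(t)\to 0$ and $g(t)\to 0$.

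It then remains to show that every trajectory reaches $x<h$ in finite time, which I would prove by contradiction. Assume $x(t)\ge h$ for all $t\ge 0$; then $\dot g\ge 0$, so $g$ is nondecreasing and $g(t)\ge g(0)>0$. I split into two cases. If the trajectory ever enters the band $x\in[h,x^*]$, then there $\dot x=x(1-x)\big((\alpha+\beta)x-\beta-gx\big)\le -g(0)\,h\,\min_{x\in[h,x^*]}x(1-x)=:-c_0<0$, so $x$ decreases at a rate bounded away from zero and drops below $h$ in finite time, a contradiction. Otherwise $x(t)>x^*$ for all $t$, whence $\dot g\ge k(x^*-h)g$ and $g(t)\ge g(0)e^{k(x^*-h)t}\to\infty$; once $g>\alpha+\beta$ the bracket is strictly negative on all of $(x^*,1)$, so $x$ is eventually strictly decreasing and cannot converge to any point of $(x^*,1)$ (there $\dot x\to-\infty$) nor to $1$ (being decreasing), so it is forced below $x^*$, again contradicting $x(t)>x^*$. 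Hence a finite time $T$ with $x(T)<h$ exists, the trajectory enters $\mathcal C$, and the previous step yields $x(t)\to 0$ and $g(t)\to 0$, solving Problem~\ref{pr:1}.

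The main obstacle I anticipate is the second case above, where the vector field degenerates as $x\to 1$ (the factor $x(1-x)$ vanishes) and one must rule out the trajectory creeping toward $x=1$ while the gain grows. The argument exploits that the blow-up of $g$ turns $\dot x$ strictly negative and large in magnitude away from $x=1$, so the trajectory is pushed back across $x^*$ rather than stalling; making this quantitative (together with the uniform descent estimate $c_0$ in the band) is the only delicate point, the remaining steps being direct sign analyses in the phase plane. This region-based approach also has the advantage of avoiding any appeal to Poincar\'e--Bendixson or Dulac-type exclusion of periodic orbits.
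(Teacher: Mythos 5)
Your proof is correct, and it is worth comparing with the paper's, since the skeleton is shared but the decomposition and the convergence mechanism are genuinely different. Both start from the same closed-loop planar system, and both rely on the same decisive mechanism on the right of the mixed equilibrium: while $x(t)\geq x^{*}:=\beta/(\alpha+\beta)$, the gain satisfies $\dot g\geq k(x^{*}-h)g$ (this is exactly where the hypothesis $h<x^{*}$ enters), and its exponential growth eventually makes the bracket in $\dot x$ strictly negative. The paper, however, partitions $\Omega$ at $x^{*}$: on $\mathcal A=[0,x^{*})\times\mathbb R_+$ it uses the $g$-independent bound $\dot x\leq x(1-x)\big((\alpha+\beta)x-\beta\big)$ and a Gronwall/comparison argument with the uncontrolled replicator to get $x(t)\to0$, then deduces $g(t)\to0$ from $\dot g<0$ near $x=0$; on $\mathcal B=[x^{*},1)\times\mathbb R_+$ it shows finite-time entry into $\mathcal A$. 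You partition at $h$ instead: in the strip $(0,h)\times(0,\infty)$ both coordinates are strictly decreasing, so the trajectory converges and its limit must be an equilibrium, and $(0,0)$ is the only equilibrium in the strip's closure (using $h<x^{*}$) --- an elementary monotonicity argument that needs no comparison principle and delivers $x\to0$ and $g\to0$ simultaneously. Your finite-time-entry step is then a two-case contradiction: on the band $[h,x^{*}]$ the bound $g\geq g(0)$ gives the uniform descent rate $c_0$, and on $(x^{*},1)$ you rerun the exponential-growth argument, explicitly excluding convergence to points of $(x^{*},1)$ (where $\dot x\to-\infty$) and to $1$ (impossible once $x$ is decreasing). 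This buys rigor precisely where the paper is terse: the paper's assertion that $x$ ``monotonically decreases until it is below $x^{*}$'' implicitly needs an argument like yours that the descent cannot stall (the factor $x(1-x)$ stays bounded away from zero on the relevant compact set while the bracket blows up), and your band estimate replaces the paper's Gronwall step; conversely, the paper's route is shorter. Two minor remarks: your opening reduction of claim ii) to claim i) by integrating \eqref{eq:gain} ends up redundant, since the strip argument already yields $g\to0$; and the degenerate initial condition $x(0)=0$ deserves one line ($x\equiv 0$, so $\dot g=-khg<0$ and $g\to0$).
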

\begin{proof}
Using~\eqref{eq:gain_dyn1} and \eqref{eq:matrix1}, \eqref{eq:controlled_replicator} for the coordination game reduces to the planar system
\begin{equation}\label{eq:controlled2}
    \begin{array}{lll}
            \dot x&=&x(1-x)\big((\alpha+\beta)x-\beta-gx\big)\\
            \dot g&=&kg(x-h).
    \end{array}
\end{equation}
which has three equilibria: $(0,0)$, which is (locally) asymptotically stable; $(1,0)$, which is an (unstable) saddle point; and $(\frac{\beta}{\alpha+\beta},0)$, which is an (unstable) source.

We now analyze \eqref{eq:controlled2}, to prove convergence to the desired equilibrium $(0,0)$ from any initial condition with $g(0)>0$ and $x(0)<1$. To this aim, we partition the positively invariant domain $\Omega$ of \eqref{eq:controlled2} into three regions $\mathcal A=[0,\frac{\beta}{\alpha+\beta})\times\mathbb R^+$, $\mathcal B=[\frac{\beta}{\alpha+\beta},1)\times\mathbb R^+$, and $\mathcal C=\{1\}\times\mathbb R^+$. In the following, we prove that any trajectory with initial conditions in $\mathcal A\cup \mathcal B$ converges to $(0,0)$. The proof is structured into two steps: i) we prove that any trajectory with initial conditions in $\mathcal A$ converges to $(0,0)$; then, ii) we  prove that any trajectory starting from $\mathcal B$ exits such set reaching $\mathcal A$.

In $\mathcal A$, we can bound the first equation in \eqref{eq:controlled2} as
 $\dot x\leq x(1-x)((\alpha+\beta)x-\beta)$, 
which is independent of $g$. Gronwall's inequality~\cite{Pachpatte_book} establishes that $x(t)\to 0$ for any  $x(0)\in \mathcal A$. As a consequence, we further have that $g(t)\to 0$, being $\dot g<0$ in the neighborhood of $x=0$. Hence, the basin of attraction of $(0,0)$ includes the entire set $\mathcal A$.

We focus now on the behavior of \eqref{eq:controlled2} in $\mathcal B$. Define $\mu:=k(\frac{\beta}{\alpha+\beta}-h)>0$, and observe from \eqref{eq:controlled2} that $\dot g\geq \mu g$. Hence, using Gronwall's inequality, we conclude that $g(t)\geq g(0)e^{\mu t}$, for any $t\geq 0$ provided that $x(t)\geq\frac{\beta}{\alpha+\beta}$. Inserting this bound into the first equation of \eqref{eq:controlled2} yields after some simplifications, 
$\dot x(t)\leq x(t)(1-x(t))(\alpha-\frac{\beta}{\alpha+\beta}g(0)e^{\mu t})$,
which holds for any $x\in[\frac{\beta}{\alpha+\beta},1)$ and is strictly negative for $t\geq\bar t=\frac{1}{\mu}\ln(\alpha)+\frac{1}{\mu}\ln(\alpha+\beta)-\frac{1}{\mu}\ln(\beta)-\frac{1}{\mu}\ln (g(0))$, which is a finite constant. This implies that $x(t)$ monotonically decreases for $t\geq \bar t$, until it is below $\frac{\beta}{\alpha+\beta}$ reaching $\mathcal A$. Once $\mathcal A$ is reached, the first part of this proof guarantees convergence to $(0,0)$, yielding the claim.
\end{proof}

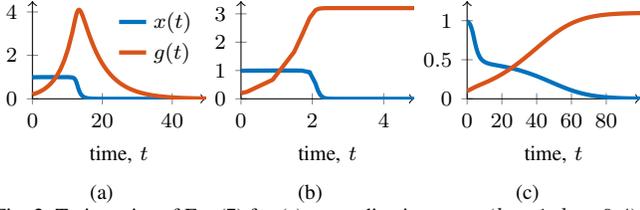
\begin{figure}
    \centering
    \subfloat[]{
%
%
\definecolor{mycolor1}{rgb}{0.00000,0.44700,0.74100}%
\definecolor{mycolor2}{rgb}{0.85000,0.32500,0.09800}%
\definecolor{mycolor3}{rgb}{0.92900,0.69400,0.12500}%
\definecolor{mycolor4}{rgb}{0.49400,0.18400,0.55600}%
\definecolor{mycolor5}{rgb}{0.46600,0.67400,0.18800}%
\definecolor{mycolor6}{rgb}{0.30100,0.74500,0.93300}%
\definecolor{mycolor7}{rgb}{0.63500,0.07800,0.18400}%
\begin{tikzpicture}

\begin{axis}[%
 axis lines=left,
 x   axis line style={->},
  y   axis line style={->},
  width=\l cm,
height=\h cm,
at={(0.758in,0.481in)},
scale only axis,
xmin=0,
xmax=49.5,
ylabel={},
xlabel={time, $t$},
ymin=0,
ymax=4.5,
axis background/.style={fill=white},legend style={at={(1,1)}, legend image post style={xscale=0.6},legend cell align=left, align=left, draw=none,fill=none},
]
\addplot [ultra thick, color=mycolor1]
  table[row sep=crcr]{%
0	0.99\\
0.212871731483863	0.991514467109075\\
0.425743462967725	0.992788947401566\\
0.638615194451588	0.993860814089521\\
0.851486925935451	0.99476307192064\\
1.52115343741342	0.996829582312619\\
2.19081994889138	0.998049259774778\\
2.86048646036935	0.998725143110301\\
3.53015297184732	0.999112930561486\\
4.38230018797808	0.999415402532342\\
5.23444740410884	0.999563938870549\\
6.0865946202396	0.999617660463883\\
6.93874183637036	0.999618053755602\\
8.03056108752861	0.997866930839312\\
9.12238033868685	0.998679724510216\\
10.2141995898451	0.999835020726312\\
11.3060188410033	0.98410355527305\\
11.4029345742904	0.98076192183414\\
11.4998503075774	0.976596215907322\\
11.5967660408645	0.971385952523667\\
11.6936817741515	0.964853160597856\\
11.7905975074386	0.956657659543915\\
11.8875132407256	0.946376239210301\\
11.9844289740127	0.933511620723152\\
12.0813447072997	0.917497660071395\\
12.2860029138599	0.87120278490336\\
12.49066112042	0.802112018769409\\
12.6953193269802	0.7094758153272\\
12.8999775335403	0.602265546426356\\
13.0844308203606	0.505221988304573\\
13.2688841071808	0.415584626426779\\
13.4533373940011	0.338130489348196\\
13.6377906808213	0.274251374422543\\
13.8242663814392	0.222025143716654\\
14.0107420820572	0.180195451327657\\
14.1972177826751	0.146819801326737\\
14.3836934832931	0.120084284085148\\
14.5641795682874	0.0990947552037662\\
14.7446656532818	0.0819959257471852\\
14.9251517382761	0.0680263099303905\\
15.1056378232705	0.0565393776002114\\
15.2893916347163	0.0468788325062862\\
15.473145446162	0.0389158086956812\\
15.6568992576078	0.0323451339853716\\
15.8406530690536	0.0269024815465198\\
16.0330113024871	0.0221840182878504\\
16.2253695359207	0.0183007008954721\\
16.4177277693542	0.0151069260654508\\
16.6100860027878	0.0124733395146024\\
16.8063083626923	0.0102547525907304\\
17.0025307225968	0.00843146639494907\\
17.1987530825013	0.00693540402449821\\
17.3949754424058	0.00570521096555086\\
17.5922913794992	0.00468528517469573\\
17.7896073165925	0.00384767315345303\\
17.9869232536858	0.00316107504501218\\
18.1842391907792	0.00259712359716695\\
18.381560097627	0.00213241395995979\\
18.5788810044748	0.00175084278085424\\
18.7762019113226	0.00143814274580743\\
18.9735228181704	0.0011813623760073\\
19.1706124757053	0.000970042099328846\\
19.3677021332403	0.000796528526245166\\
19.5647917907753	0.000654332479700717\\
19.7618814483103	0.000537561628113749\\
19.9587678300017	0.000441446020875787\\
20.155654211693	0.000362523471664595\\
20.3525405933844	0.000297842268449697\\
20.5494269750757	0.000244722366161174\\
20.7721874308289	0.00019577619706026\\
20.9949478865821	0.000156627020904805\\
21.2177083423353	0.000125427944673175\\
21.4404687980885	0.00010046811512492\\
21.6991691604157	7.74835007007497e-05\\
21.9578695227428	5.97611675734637e-05\\
22.21656988507	4.6207900652206e-05\\
22.4752702473971	3.57565460803523e-05\\
22.7812208088007	2.62531973239828e-05\\
23.0871713702043	1.92746277447683e-05\\
23.3931219316079	1.42565639926381e-05\\
23.6990724930115	1.05753741705391e-05\\
24.0703764631503	7.22435977916374e-06\\
24.4416804332892	4.92856426375454e-06\\
24.8129844034281	3.45548690164033e-06\\
25.1842883735669	2.45449545393447e-06\\
25.6498906433014	1.47927647184372e-06\\
26.1154929130359	8.78715930605033e-07\\
26.5810951827703	5.99858002799513e-07\\
27.0466974525048	4.4019519631427e-07\\
27.6555541649569	1.84905515767629e-07\\
28.2644108774089	5.42108434049077e-08\\
28.8732675898609	7.24001091537742e-08\\
29.482124302313	9.78607772648816e-08\\
30.3119878778925	-3.77394845771394e-08\\
31.1418514534721	-8.82207214756675e-08\\
31.9717150290517	2.55092966511704e-08\\
32.8015786046313	1.00204303493375e-07\\
33.870217925904	-3.65902827928099e-07\\
34.9388572471766	-4.9539335106151e-07\\
36.0074965684493	1.53830145025173e-07\\
37.076135889722	5.03995459310513e-07\\
37.8184919701046	3.14349596740727e-08\\
38.5608480504873	-1.67984715018307e-07\\
39.3032041308699	8.504930171187e-08\\
40.0455602112525	2.68984241704303e-07\\
40.7879162916351	1.67798029550079e-08\\
41.5302723720177	-8.96496380865786e-08\\
42.2726284524003	4.53920535718819e-08\\
43.0149845327829	1.43555828191515e-07\\
43.9041183472765	-1.22025031525568e-07\\
44.79325216177	-2.13302652337607e-07\\
45.6823859762636	5.69195044449267e-08\\
46.5715197907572	2.25064593319827e-07\\
47.4286398430679	-1.30456927738161e-07\\
48.2857598953786	-2.57958434429922e-07\\
49.1428799476893	7.06295399563805e-08\\
50	2.80971958471625e-07\\
};
\addlegendentry{$x(t)$}

\addplot [ultra thick,color=mycolor2]
  table[row sep=crcr]{%
0	0.2\\
0.212871731483863	0.21031811000309\\
0.425743462967725	0.221194740094067\\
0.638615194451588	0.232657125528666\\
0.851486925935451	0.244734032169363\\
1.52115343741342	0.287071737331252\\
2.19081994889138	0.336879566250886\\
2.86048646036935	0.395459631219129\\
3.53015297184732	0.464299439023306\\
4.38230018797808	0.569563998222988\\
5.23444740410884	0.698685810592018\\
6.0865946202396	0.857074065847756\\
6.93874183637036	1.05143007421589\\
8.03056108752861	1.36520134341458\\
9.12238033868685	1.77470509151393\\
10.2141995898451	2.30690366809745\\
11.3060188410033	2.98697815026591\\
11.4029345742904	3.05519372447152\\
11.4998503075774	3.12451610606663\\
11.5967660408645	3.1948337142704\\
11.6936817741515	3.26599196153902\\
11.7905975074386	3.33778480694861\\
11.8875132407256	3.40994209128053\\
11.9844289740127	3.48210381979973\\
12.0813447072997	3.55380736486853\\
12.2860029138599	3.70163397007005\\
12.49066112042	3.8368909520466\\
12.6953193269802	3.95020051170617\\
12.8999775335403	4.03391199110503\\
13.0844308203606	4.07962029672424\\
13.2688841071808	4.0975795474877\\
13.4533373940011	4.09044308161832\\
13.6377906808213	4.06193433552646\\
13.8242663814392	4.01597590800202\\
14.0107420820572	3.95669571786587\\
14.1972177826751	3.88732909633742\\
14.3836934832931	3.81063265170381\\
14.5641795682874	3.73150567008956\\
14.7446656532818	3.6490113095934\\
14.9251517382761	3.56431696828871\\
15.1056378232705	3.47839990149111\\
15.2893916347163	3.39045746580305\\
15.473145446162	3.30260104011559\\
15.6568992576078	3.2152923761258\\
15.8406530690536	3.12891427548702\\
16.0330113024871	3.03980525732849\\
16.2253695359207	2.95225756134038\\
16.4177277693542	2.86644469436213\\
16.6100860027878	2.78250416038561\\
16.8063083626923	2.69890440946608\\
17.0025307225968	2.61740141234231\\
17.1987530825013	2.53802624827987\\
17.3949754424058	2.46079572478199\\
17.5922913794992	2.38529651971561\\
17.7896073165925	2.31194429407848\\
17.9869232536858	2.2407118328122\\
18.1842391907792	2.17156724945972\\
18.381560097627	2.10447087145873\\
18.5788810044748	2.03937954233136\\
18.7762019113226	1.97624688594781\\
18.9735228181704	1.91502570797861\\
19.1706124757053	1.85573589826393\\
19.3677021332403	1.79825446307677\\
19.5647917907753	1.74253166448746\\
19.7618814483103	1.68851837253399\\
19.9587678300017	1.63621898549838\\
20.155654211693	1.58552858144121\\
20.3525405933844	1.53639983715068\\
20.5494269750757	1.48878649570854\\
20.7721874308289	1.43668624973248\\
20.9949478865821	1.38640383311117\\
21.2177083423353	1.33787702967256\\
21.4404687980885	1.29104553440793\\
21.6991691604157	1.23870870182419\\
21.9578695227428	1.18849104263171\\
22.21656988507	1.14030740449552\\
22.4752702473971	1.09407585874928\\
22.7812208088007	1.04181212723547\\
23.0871713702043	0.9920440771341\\
23.3931219316079	0.94465290942104\\
23.6990724930115	0.899525186102134\\
24.0703764631503	0.847642873422888\\
24.4416804332892	0.798752786690551\\
24.8129844034281	0.752682706237172\\
25.1842883735669	0.70926968674362\\
25.6498906433014	0.658351316397958\\
26.1154929130359	0.61108856627476\\
26.5810951827703	0.567219710585304\\
27.0466974525048	0.526500085729468\\
27.6555541649569	0.477626822191858\\
28.2644108774089	0.433291036452102\\
28.8732675898609	0.393073741986008\\
29.482124302313	0.356589459281667\\
30.3119878778925	0.31224287013101\\
31.1418514534721	0.273413422855851\\
31.9717150290517	0.239423393569464\\
32.8015786046313	0.209659926211921\\
33.870217925904	0.176690623725601\\
34.9388572471766	0.148909245644803\\
36.0074965684493	0.125520729157421\\
37.076135889722	0.105809371421116\\
37.8184919701046	0.0939580399779853\\
38.5608480504873	0.0834345009825241\\
39.3032041308699	0.0740913653332114\\
40.0455602112525	0.0657946184689034\\
40.7879162916351	0.0584251925243183\\
41.5302723720177	0.0518814207230347\\
42.2726284524003	0.0460716541369264\\
43.0149845327829	0.0409125546304581\\
43.9041183472765	0.0354859943334284\\
44.79325216177	0.0307795201055655\\
45.6823859762636	0.0266990432741094\\
46.5715197907572	0.0231597128067429\\
47.4286398430679	0.0201911515964861\\
48.2857598953786	0.0176032473310333\\
49.1428799476893	0.0153478623277787\\
50	0.0133815281699332\\
};
\addlegendentry{$g(t)$}

\end{axis}
\end{tikzpicture}
\subfloat[]{
%
%
\definecolor{mycolor1}{rgb}{0.00000,0.44700,0.74100}%
\definecolor{mycolor2}{rgb}{0.85000,0.32500,0.09800}%
\definecolor{mycolor3}{rgb}{0.92900,0.69400,0.12500}%
\definecolor{mycolor4}{rgb}{0.49400,0.18400,0.55600}%
\definecolor{mycolor5}{rgb}{0.46600,0.67400,0.18800}%
\definecolor{mycolor6}{rgb}{0.30100,0.74500,0.93300}%
\definecolor{mycolor7}{rgb}{0.63500,0.07800,0.18400}%
\begin{tikzpicture}

\begin{axis}[%
 axis lines=left,
 x   axis line style={->},
  y   axis line style={->},
  width=\l cm,
height=\h cm,
at={(0.758in,0.481in)},
scale only axis,
xmin=0,
xmax=4.85,
ylabel={},
xlabel={time, $t$},
ymin=0,
ymax=3.45,
axis background/.style={fill=white},legend style={at={(.4,1)}, legend image post style={xscale=0.6},legend cell align=left, align=left, draw=none,fill=none},
]
\addplot [ultra thick, color=mycolor1]
  table[row sep=crcr]{%
0	0.99\\
0.167459095433972	0.99437197596897\\
0.88809699331047	0.999022174852034\\
1.48779402054587	0.998696089589295\\
1.7055453710483	0.995375074804815\\
1.92329672155073	0.940370656410951\\
2.08481058580715	0.549818034548405\\
2.15295250789078	0.303934180299088\\
2.22109442997442	0.13958178729448\\
2.28978116665248	0.0575302838472321\\
2.34992363054925	0.0254732869145372\\
2.40768040033215	0.0114766822307919\\
2.46450942917201	0.0052036243004574\\
2.52094002183088	0.00236567010425615\\
2.57719370734109	0.00107674226906496\\
2.64268384310993	0.000430529197649743\\
2.71858857476824	0.000148910719354967\\
2.80881721857623	4.23494547070469e-05\\
2.91886602012185	9.3615991872964e-06\\
3.05781743905216	1.63511196027422e-06\\
3.2410232452166	4.35052379121892e-07\\
3.49148164923853	6.28608458286273e-07\\
3.80194303114913	3.54225082255244e-06\\
4.00234327976475	1.40804550674556e-06\\
4.20274352838037	5.59692857393657e-07\\
4.45738319526598	8.96701155106427e-07\\
4.75172596488285	3.59498844426653e-06\\
4.95377003700084	1.48949153022659e-06\\
5.15581410911884	6.17125911483996e-07\\
5.40729016259262	9.1449554939782e-07\\
5.6965780011567	3.28302152174942e-06\\
5.90234655997052	1.49589984754783e-06\\
6.10811511878434	6.81594573781474e-07\\
6.35866516034022	9.8709057094748e-07\\
6.64244528416109	3.13546488206473e-06\\
6.90003906189922	5.40017854511188e-06\\
7.10814325704264	2.61347286611979e-06\\
7.30161659763872	8.76428750573588e-07\\
7.52783011201308	6.81152644177452e-07\\
7.80244400977411	1.75636308086258e-06\\
8.08159085296147	5.02494485843386e-06\\
8.31185540356826	4.34082920430789e-06\\
8.5049401740277	1.44227810039421e-06\\
8.70938519162876	6.35228217249484e-07\\
8.96202464666783	9.68805825118935e-07\\
9.24941001798386	3.3348639678632e-06\\
9.4549141215276	1.50923081876364e-06\\
9.66041822507134	6.83011469604367e-07\\
9.91057418536101	9.79500850872839e-07\\
10.1943150768653	3.10861956231978e-06\\
10.4523135903314	5.40691845463126e-06\\
10.660717391311	2.63712064263455e-06\\
10.8540987471409	8.82418803582752e-07\\
11.0799211588952	6.78830772704586e-07\\
11.3542397162077	1.73847603279764e-06\\
11.633633083963	5.00173337292242e-06\\
11.8643313735568	4.3697783478347e-06\\
12.057534127128	1.45598223377785e-06\\
12.2616888721976	6.36521893859765e-07\\
12.5139065268766	9.60712664310748e-07\\
12.8012599038722	3.30465984121897e-06\\
13.0070756278237	1.50758660744099e-06\\
13.2128913517752	6.87753094918975e-07\\
13.4630427858913	9.86189905072266e-07\\
13.7463922439436	3.10248501811383e-06\\
14.0041070819434	5.3591389961305e-06\\
14.2126343459927	2.62219288826859e-06\\
14.4063412869868	8.8427897015908e-07\\
14.6323632865856	6.83821398839157e-07\\
14.9065262749602	1.74497142335639e-06\\
15.1855399979448	4.97724742531958e-06\\
15.4161250610235	4.33560482782092e-06\\
15.6095334711894	1.4516992887792e-06\\
15.8139764999289	6.39345130948321e-07\\
16.0662875602627	9.67204883073336e-07\\
16.3533679058481	3.30691465855182e-06\\
16.5591985889874	1.50919569319452e-06\\
16.7650292721268	6.88752035182816e-07\\
17.0151245067794	9.86242814859928e-07\\
17.2984029206949	3.09770663109443e-06\\
17.5561272112081	5.35211499153032e-06\\
17.7647174056752	2.62303136866482e-06\\
17.9584659723432	8.85443758757987e-07\\
18.1844662422845	6.84333230860368e-07\\
18.4585614189364	1.74354157711498e-06\\
18.7375461972883	4.96989267684267e-06\\
18.9681735563205	4.33396015186993e-06\\
19.1616332370107	1.45292637609984e-06\\
19.3660836668957	6.40006739938137e-07\\
19.6183506758471	9.6715197550612e-07\\
19.9053805035207	3.30303029316692e-06\\
20	8.85994267618706e-07\\
};

\addplot [ultra thick,color=mycolor2]
  table[row sep=crcr]{%
0	0.2\\
0.167459095433972	0.246100068650257\\
0.88809699331047	0.673674332044964\\
1.48779402054587	1.65223176867484\\
1.7055453710483	2.23788037730758\\
1.92329672155073	2.85271642867645\\
2.08481058580715	3.12506008975829\\
2.15295250789078	3.16933825032963\\
2.22109442997442	3.18940664805245\\
2.28978116665248	3.19775221232105\\
2.34992363054925	3.20077707876863\\
2.40768040033215	3.20206020623569\\
2.46450942917201	3.20262815819542\\
2.52094002183088	3.20288367742374\\
2.57719370734109	3.20299943710217\\
2.64268384310993	3.20305740575597\\
2.71858857476824	3.20308265414876\\
2.80881721857623	3.2030922056005\\
2.91886602012185	3.20309516216335\\
3.05781743905216	3.20309585463852\\
3.2410232452166	3.20309596219139\\
3.49148164923853	3.20309594484401\\
3.80194303114913	3.20309568371598\\
4.00234327976475	3.20309587498528\\
4.20274352838037	3.20309595101671\\
4.45738319526598	3.20309592081252\\
4.75172596488285	3.20309567898432\\
4.95377003700084	3.2030958676802\\
5.15581410911884	3.2030959458636\\
5.40729016259262	3.20309591921188\\
5.6965780011567	3.20309570693735\\
5.90234655997052	3.20309586709975\\
6.10811511878434	3.20309594007945\\
6.35866516034022	3.20309591269932\\
6.64244528416109	3.20309572015497\\
6.90003906189922	3.20309551717313\\
7.10814325704264	3.20309576691105\\
7.30161659763872	3.20309592258856\\
7.52783011201308	3.20309594008923\\
7.80244400977411	3.20309584372493\\
8.08159085296147	3.2030955507831\\
8.31185540356826	3.20309561207279\\
8.5049401740277	3.20309587184442\\
8.70938519162876	3.20309594417399\\
8.96202464666783	3.20309591427716\\
9.24941001798386	3.20309570222365\\
9.4549141215276	3.20309586583745\\
9.66041822507134	3.20309593988492\\
9.91057418536101	3.20309591331199\\
10.1943150768653	3.20309572249338\\
10.4523135903314	3.20309551650175\\
10.660717391311	3.20309576472415\\
10.8540987471409	3.20309592198417\\
11.0799211588952	3.20309594022977\\
11.3542397162077	3.20309584526048\\
11.633633083963	3.20309555279591\\
11.8643313735568	3.20309560941084\\
12.057534127128	3.20309587054866\\
12.2616888721976	3.20309594399048\\
12.5139065268766	3.20309591493494\\
12.8012599038722	3.20309570486307\\
13.0070756278237	3.20309586591728\\
13.2128913517752	3.20309593939242\\
13.4630427858913	3.20309591264496\\
13.7463922439436	3.2030957229756\\
14.0041070819434	3.20309552071633\\
14.2126343459927	3.20309576599468\\
14.4063412869868	3.20309592175011\\
14.6323632865856	3.20309593971515\\
14.9065262749602	3.20309584461099\\
15.1855399979448	3.20309555492297\\
15.4161250610235	3.2030956124064\\
15.6095334711894	3.20309587086537\\
15.8139764999289	3.20309594367032\\
16.0662875602627	3.20309591428594\\
16.3533679058481	3.20309570459382\\
16.5591985889874	3.20309586570589\\
16.7650292721268	3.20309593923571\\
17.0151245067794	3.20309591257303\\
17.2984029206949	3.20309572333666\\
17.5561272112081	3.20309552127869\\
17.7647174056752	3.20309576585239\\
17.9584659723432	3.20309592157857\\
18.1844662422845	3.20309593960213\\
18.4585614189364	3.20309584467198\\
18.7375461972883	3.20309555551497\\
18.9681735563205	3.20309561248669\\
19.1616332370107	3.20309587068828\\
19.3660836668957	3.2030959435439\\
19.6183506758471	3.20309591422356\\
19.9053805035207	3.20309570487482\\
20	3.20309592149851\\
};

\end{axis}
\end{tikzpicture}
    \subfloat[]{
%
%
\definecolor{mycolor1}{rgb}{0.00000,0.44700,0.74100}%
\definecolor{mycolor2}{rgb}{0.85000,0.32500,0.09800}%
\definecolor{mycolor3}{rgb}{0.92900,0.69400,0.12500}%
\definecolor{mycolor4}{rgb}{0.49400,0.18400,0.55600}%
\definecolor{mycolor5}{rgb}{0.46600,0.67400,0.18800}%
\definecolor{mycolor6}{rgb}{0.30100,0.74500,0.93300}%
\definecolor{mycolor7}{rgb}{0.63500,0.07800,0.18400}%
\begin{tikzpicture}

\begin{axis}[%
 axis lines=left,
 x   axis line style={->},
  y   axis line style={->},
  width=\l cm,
height=\h cm,
at={(0.758in,0.481in)},
scale only axis,
xmin=0,
xmax=99.5,
ylabel={},
xlabel={time, $t$},
ymin=0,
ymax=1.25,
axis background/.style={fill=white},legend style={at={(.45,1)}, legend image post style={xscale=0.6},legend cell align=left, align=left, draw=none,fill=none},
]
\addplot [ultra thick, color=mycolor1]
  table[row sep=crcr]{%
0	0.99\\
0.50745180434537	0.983767087423495\\
1.01490360869074	0.973813005294851\\
1.52235541303611	0.95842552085584\\
2.02980721738148	0.935833397850007\\
2.60073903188273	0.900008049281939\\
3.17167084638398	0.852812293001088\\
3.74260266088524	0.796872686406047\\
4.31353447538649	0.738365761064979\\
4.88446628988775	0.684036680991945\\
5.455398104389	0.636342241984134\\
6.02632991889025	0.596578293876345\\
6.59726173339151	0.564559873123092\\
7.36755219755859	0.531100758215582\\
8.13784266172567	0.506887209182327\\
8.90813312589276	0.489911415889242\\
9.67842359005984	0.477385878934215\\
10.2957576065354	0.469142938231935\\
10.9130916230111	0.462482331892971\\
11.5304256394867	0.457041882810835\\
12.1477596559623	0.452463827585113\\
12.9200889238252	0.447537444597003\\
13.6924181916881	0.443318267119472\\
14.464747459551	0.439614950427553\\
15.237076727414	0.43620441461014\\
16.1967856470289	0.432136514906222\\
17.1564945666439	0.428218349792036\\
18.1162034862589	0.424374045455183\\
19.0759124058738	0.42048885402433\\
20.3508892097396	0.415140994818342\\
21.6258660136054	0.409580935359147\\
22.9008428174712	0.40376739798669\\
24.1758196213369	0.397646534775049\\
26.233236809528	0.387226607547933\\
28.2906539977191	0.375763566003221\\
30.3480711859102	0.362967875613061\\
32.4054883741013	0.349207176245444\\
34.2427237574628	0.336390177691429\\
36.0799591408243	0.322651860079443\\
37.9171945241858	0.307911106865341\\
39.7544299075473	0.292404385337318\\
41.5240440270539	0.276963489006033\\
43.2936581465605	0.260937584500663\\
45.0632722660671	0.244398732069434\\
46.8328863855737	0.22753140076071\\
48.8023319451375	0.208626047149765\\
50.7717775047013	0.189760702897144\\
52.7412230642651	0.171163746884646\\
54.7106686238289	0.153087151754547\\
57.0067208712743	0.133014797824435\\
59.3027731187197	0.114321081935334\\
61.5988253661651	0.0972306012075929\\
63.8948776136106	0.0818856712137009\\
66.3948776136106	0.0672387496856876\\
68.8948776136106	0.0546960118834564\\
71.3948776136106	0.0441321550829552\\
73.8948776136106	0.035372331889374\\
76.3948776136106	0.0281867998937796\\
78.8948776136106	0.0223516005296121\\
81.3948776136106	0.0176639629907999\\
83.8948776136106	0.0139218622422662\\
86.1398756454033	0.0112121813038603\\
88.384873677196	0.00901501623749894\\
90.6298717089887	0.00724335150678087\\
92.8748697407814	0.00581504827505591\\
94.6561523055861	0.00487954639702222\\
96.4374348703907	0.00409273896686197\\
98.2187174351953	0.00343219991234664\\
100	0.00287747632991752\\
};

\addplot [ultra thick,color=mycolor2]
  table[row sep=crcr]{%
0	0.1\\
0.50745180434537	0.105136546254987\\
1.01490360869074	0.110492835854544\\
1.52235541303611	0.116048156229475\\
2.02980721738148	0.121765784930381\\
2.60073903188273	0.128330964446524\\
3.17167084638398	0.134922265086982\\
3.74260266088524	0.141425383492721\\
4.31353447538649	0.147758447287794\\
4.88446628988775	0.153875741217139\\
5.455398104389	0.159780929068128\\
6.02632991889025	0.16550003177576\\
6.59726173339151	0.171071738955016\\
7.36755219755859	0.178450408998194\\
8.13784266172567	0.185735514782985\\
8.90813312589276	0.19298766477381\\
9.67842359005984	0.200292152485228\\
10.2957576065354	0.206230045432518\\
10.9130916230111	0.212246354419209\\
11.5304256394867	0.218354316335292\\
12.1477596559623	0.224568593904208\\
12.9200889238252	0.23251157448131\\
13.6924181916881	0.240650326527273\\
14.464747459551	0.248994610626255\\
15.237076727414	0.257557287869329\\
16.1967856470289	0.26851738713547\\
17.1564945666439	0.279836372557836\\
18.1162034862589	0.291520816158559\\
19.0759124058738	0.30358026990644\\
20.3508892097396	0.320192278061061\\
21.6258660136054	0.337478525781776\\
22.9008428174712	0.3554385845297\\
24.1758196213369	0.374069611184692\\
26.233236809528	0.405495044182962\\
28.2906539977191	0.438583198564751\\
30.3480711859102	0.473253066254292\\
32.4054883741013	0.509277664984683\\
34.2427237574628	0.542361900195317\\
36.0799591408243	0.576195686956247\\
37.9171945241858	0.610590703351707\\
39.7544299075473	0.645249283194956\\
41.5240440270539	0.678577672944233\\
43.2936581465605	0.711646248830635\\
45.0632722660671	0.744207303069646\\
46.8328863855737	0.775964165967156\\
48.8023319451375	0.810000497552903\\
50.7717775047013	0.842391740576373\\
52.7412230642651	0.872880128501536\\
54.7106686238289	0.901212569172203\\
57.0067208712743	0.931262168408858\\
59.3027731187197	0.958043651641663\\
61.5988253661651	0.981581456057937\\
63.8948776136106	1.00195814979953\\
66.3948776136106	1.02076439660849\\
68.8948776136106	1.03640185173152\\
71.3948776136106	1.04924493958329\\
73.8948776136106	1.05968387211468\\
76.3948776136106	1.06811703812219\\
78.8948776136106	1.07486986482941\\
81.3948776136106	1.08022856505044\\
83.8948776136106	1.08447082169318\\
86.1398756454033	1.08752817624016\\
88.384873677196	1.08999446339286\\
90.6298717089887	1.09197428188964\\
92.8748697407814	1.0935653879761\\
94.6561523055861	1.09460555971726\\
96.4374348703907	1.09547886130882\\
98.2187174351953	1.09621091271838\\
100	1.09682495424048\\
};

\end{axis}
\end{tikzpicture}
\caption{Trajectories of \eqref{eq:controlled_replicator} for (a) a coordination game ($k=1$, $h=0.4$);  (b) a prisoner's dilemma ($k=2$, $h=1$); and (c) a minority game ($k=0.1$, $h=1$). Common parameters are $\alpha=\beta=1$ and $x(0)=0.99$.}
    \label{fig:2}
\end{figure}

Fig.~\ref{fig:2b} shows an exemplifying trajectory of the controlled replicator equation for a scenario in which the uncontrolled system converges to the NE $x=1$. Our controller, instead, is able to steer the system to the desired equilibrium $x = 0$. It is important to highlight that, even though Theorem~\ref{th:1} imposes a constraint on the parameter $h$ that depends on the entries of $\mat{\hat{A}}$, it is not necessary to know their precise values. In fact, one just needs a bound on the quantities $\alpha$ and $\beta$ (in particular, on their relative magnitude), in order to design the parameter $h$ to satisfy the condition in Theorem~\ref{th:1}. 


\subsection{Dominant-Strategy Games}\label{sec:dominant}

We consider a dominant-strategy game in which coordination on action $1$ is the NE ($a>c$ and $b>d$). The case in which action $2$ is the unique NE is trivial for Problem~\ref{pr:2}, since no control is needed to guarantee convergence to the desired equilibrium. The prisoner's dilemma illustrated in Example~\ref{ex:pd} is an example of this scenario. Similar to the coordination game, we define two strictly positive constants $\alpha=a-c>0$ and $\beta=-d+b>0$. 
Using this notation,~\eqref{eq:replicator} reduces to
\begin{equation}\label{eq:dom}
    \dot x=
    x(1-x)(\alpha+\beta(1-x)),
\end{equation}
and all trajectories with $x(0)>0$ converge to $x=1$ (Proposition~\ref{prop:behavior}), as can be observed in Fig.~\ref{fig:1b}.

In order to solve Problem~\ref{pr:2}, we want $\dot g$ to be positive when $x$ is far from $0$, and $g(t)$ should converge. The latter implies $\dot g\to 0$ by \eqref{eq:gain}. For this reason, we propose 
\begin{equation}\label{eq:gain_dyn2}
    \phi^{(2)}(x)=kx^h,
\end{equation}
with $k>0$ regulating the rate of adaptation and $h>0$ its sensitivity. Moreover, since our goal is to stabilize the equilibrium in which the entire population adopts action $2$, which would otherwise be unstable, we  use a  control matrix that promotes coordination on action $2$:
\begin{equation}\label{eq:matrix2}
 \mat{G^{(2)}}=\begin{bmatrix}
    0&0\\0&1
    \end{bmatrix}.
\end{equation}

\begin{theorem}\label{th:2}\rm
The adaptive-gain control $(\mat{G^{(2)}},\phi^{(2)})$ with $k>\alpha$ solves Problem~\ref{pr:2} for a dominant-strategy game, for any initial condition  $x(0)\in[0,1)$.
\end{theorem}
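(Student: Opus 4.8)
The plan is to substitute the adaptation function \eqref{eq:gain_dyn2} and the control matrix \eqref{eq:matrix2} into \eqref{eq:controlled_replicator}, which (using $(\alpha-\beta)x+\beta=\alpha x+\beta(1-x)$) reduces the controlled dynamics to the planar system $\dot x=x(1-x)\big(\alpha x+(\beta-g)(1-x)\big)$ and $\dot g=kx^hg$. Two structural observations drive everything. First, since $\dot g=kx^hg\ge 0$ and $g(0)>0$, the gain $g$ is strictly positive and monotonically nondecreasing, so the equilibrium set is exactly the segment $\{0\}\times\mathbb R_+$; proving the claim therefore amounts to showing $x\to 0$ together with \emph{boundedness} of $g$, which by monotonicity yields $g\to\bar g$ for some finite $\bar g$. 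Second, for $x\in(0,1)$ the sign of $\dot x$ is governed by the $x$-nullcline $g=G(x):=\beta+\frac{\alpha x}{1-x}$, which increases strictly from $G(0)=\beta$ and blows up as $x\to 1$: above it ($g>G(x)$) one has $\dot x<0$, below it $\dot x>0$. I would partition $\Omega$ by this curve and treat the two regions separately.

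The region $\{g>G(x)\}$ is the favorable one. Differentiating $V:=g-G(x)$ along trajectories gives $\dot V=kx^hg-G'(x)\dot x$ with $G'(x)=\alpha/(1-x)^2>0$; since $\dot x<0$ there, both contributions are positive, so $V$ increases and the region is forward invariant while $0<x<1$. In it $x$ is decreasing and bounded below, hence $x\to x_\infty\ge 0$. A contradiction argument rules out $x_\infty>0$: that would keep $x$ bounded away from $0$, forcing $\dot g\ge kx_\infty^hg$ and $g\to\infty$, which in turn drives $\dot x=x(1-x)(\alpha x+(\beta-g)(1-x))\to-\infty$, incompatible with $x$ tending to a positive limit. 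Thus $x\to 0$. Finally, because $g>G(x)\ge\beta$ on this region and $g$ is nondecreasing, once $x$ is small one gets $\dot x\le -\tfrac12(g(t_1)-\beta)x(1-x)$, i.e.\ exponential decay of $x$; then $\int^\infty x^h\,\mathrm ds<\infty$ and $g(t)=g(t_1)\exp\!\big(k\int_{t_1}^t x(s)^h\,\mathrm ds\big)$ stays bounded, giving $g\to\bar g$. This settles Problem~\ref{pr:2} for any trajectory that reaches $\{g>G(x)\}$.

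It remains to show that \emph{every} trajectory with $x(0)\in(0,1)$ enters $\{g>G(x)\}$ in finite time; this is the heart of the proof and the place where $k>\alpha$ is indispensable. On the nullcline $V=0$ one computes $\dot V=kx^hg>0$, so the curve is crossed transversally upward, and it suffices to exclude a trajectory staying in $\{g<G(x)\}$ for all time. There $\dot x>0$, so $x$ increases monotonically to some $x_\infty$; if $x_\infty<1$ then, as above, $g\to\infty$ while simultaneously $g<G(x)\le G(x_\infty)<\infty$, a contradiction, so necessarily $x_\infty=1$. I would rule out this escape by a logarithmic growth-rate comparison. Writing $w:=\ln g$ and $\psi:=-\ln(1-x)$, one has $\dot w=kx^h$ and, once $g\ge\beta$ (which must occur, else $\dot g\to kg_\infty>0$ forces $g\to\infty$), $\dot\psi=x(\alpha x+(\beta-g)(1-x))\le\alpha x^2\le\alpha$. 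Choosing $\rho$ with $\alpha<\rho<k$ and using $x^h\to 1$, we get $\dot w\ge\rho>\alpha\ge\dot\psi$ for large $t$, so $w-\psi=\ln\big(g(1-x)\big)\to+\infty$, i.e.\ $g(1-x)\to\infty$. But in $\{g<G(x)\}$ one has $g(1-x)<\beta(1-x)+\alpha x<\alpha+\beta$, a contradiction. Hence the trajectory crosses into $\{g>G(x)\}$ in finite time and the previous paragraph concludes; the degenerate case $x(0)=0$ is immediate, as the system sits at the equilibrium $(0,g(0))$.

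The main difficulty is exactly this last step: the gain must grow fast enough to overtake the nullcline $G(x)\sim\alpha/(1-x)$ before $x$ is swept to $1$ by the dominant strategy. The comparison shows $\ln g$ grows at asymptotic rate $k$ while $-\ln(1-x)$ grows at rate at most $\alpha$, so $k>\alpha$ is precisely what makes the gain win this race; I expect establishing the transversal crossing, the uniform bound $\dot\psi\le\alpha$, and the handling of the transient phase $g<\beta$ to require the most care.
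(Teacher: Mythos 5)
Your proposal is correct, and it takes a genuinely different route from the paper's, so a comparison is worthwhile. The paper runs a dichotomy on the monotonically nondecreasing gain: either $g(t)\to\bar g$, in which case $\dot g\to 0$ forces $x(t)\to 0$ at once, or $g(t)\to+\infty$, which is ruled out by contradiction --- first excluding $x(t)\to 1$ by the same growth-rate race you use (this is exactly where $k>\alpha$ enters, but the paper only sketches this step, explicitly omitting the technical details), then choosing $\varepsilon,\gamma>0$ and $\tau_\varepsilon$ with $x(\tau_\varepsilon)\le 1-\varepsilon$ and $g(\tau_\varepsilon)\ge\alpha/\varepsilon+\beta+\gamma$, deducing the exponential bound $x(t)\le(1-\varepsilon)e^{-\gamma\varepsilon^2(t-\tau_\varepsilon)}$, and concluding from $g(t)=g(0)e^{k\int_0^t x(s)^h\,ds}$ that $g$ stays finite, a contradiction. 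Your proof replaces this dichotomy by a phase-plane decomposition along the $x$-nullcline $g=G(x)=\beta+\alpha x/(1-x)$: forward invariance of $\{g>G(x)\}$, convergence $x\to 0$ and boundedness of $g$ there (via the same exponential-decay-plus-integrability mechanism the paper uses), and finite-time entry into that region via the logarithmic comparison $\frac{d}{dt}\ln g=kx^h\to k$ versus $\frac{d}{dt}\big(-\ln(1-x)\big)\le\alpha$, which makes $g(1-x)\to\infty$ and contradicts $g(1-x)<\alpha+\beta$ below the nullcline. That last step is precisely the content the paper leaves out, so your write-up is the more self-contained of the two; the paper's organization, in exchange, dispatches the convergent-gain case in one line and needs no invariance analysis. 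Two minor points: your reduction $\dot x=x(1-x)\big(\alpha x+(\beta-g)(1-x)\big)$ is the algebraically correct one --- the paper's \eqref{eq:controlled3} displays $\alpha$ in place of $\alpha x$, an apparent typo that affects neither argument --- and your intermediate bound $\dot x\le-\tfrac12(g(t_1)-\beta)x(1-x)$ should carry $(1-x)^2$ rather than $(1-x)$, which is immaterial since $x\to 0$ lets the extra factor be absorbed into the constant.
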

\begin{proof}
Using~(\ref{eq:gain_dyn2}--\ref{eq:matrix2}), the controlled replicator dynamics in \eqref{eq:controlled_replicator} for a dominant-strategy game with action~$1$ as the unique NE reduces to the following planar system of ODEs:
\begin{equation}\label{eq:controlled3}
    \begin{array}{lll}
            \dot x&=&x\big(1-x\big)\big(\alpha+(\beta-g)(1-x)\big)\\
            \dot g&=&kgx^h.
    \end{array}
\end{equation}

Being $x(t)\in[0,1]$ for all $t\in\mathbb R_+$ (Lemma~\ref{lemma:invariance}), then $\dot g\geq 0$, and thus $g(t)$ is monotonically nondecreasing. So either it converges to some constant $\bar g$ or it diverges to $+\infty$.

In the first scenario ($g(t)\to \bar g$), then necessarily $\dot g(t)\to 0$. This implies $x(t)\to 0$, solving Problem~\ref{pr:2}. In the following, we rule out the second scenario by way of contradiction. 

To obtain a contradiction, assume that $g(t)$ diverges to $+\infty$. First, we exclude the possibility that $x(t)$ converges to $1$. To this aim, we observe that, if $k>\alpha$, then $g(t)$ grows in a neighborhood of $x=1$ faster than an exponentially-growing function with exponent $\alpha$, while the exponential rate at which $x(t)\to 1$ cannot be larger than $\alpha$. Hence, the term $g(1-x)$ grows exponentially large, guaranteeing that there exists a time $\bar t$ such that $\dot x(t)<0$ for all $t\geq \bar t$, and thus $x(t)$ eventually decreases and cannot converge to $1$. Technical details are omitted due to space constraint. 

Since $x(t)$ does not converge to $1$ and we have assumed that $g(t)$ converges to $+\infty$, then we can find two positive constants $\varepsilon>0$ and $\tau_\varepsilon<\infty$ such that i) $x(\tau_\varepsilon)\leq 1-\varepsilon$ and ii) $g(\tau_\varepsilon)\geq \alpha/\varepsilon+\beta+\gamma$, where $\gamma>0$ is a small but positive constant. This implies that $\dot x(\tau_\varepsilon)\leq -\gamma x(1-x)^2$. By continuity, and since $g(t)$ is monotonically increasing, we can conclude that the bound $\dot x(t)\leq -\gamma \varepsilon^2 x$ holds true for any $t\geq\tau_\varepsilon$. Therefore, Gronwall's lemma yields
\begin{equation}\label{eq:exp_conv_bound}
    x(t)\leq (1-\varepsilon)e^{-\gamma\varepsilon^2(t-\tau_\varepsilon)},
\end{equation} for any $t\geq \tau_\varepsilon$. Finally, by integrating the second equation in \eqref{eq:controlled3}, and inserting \eqref{eq:exp_conv_bound} into it, we obtain
    \begin{equation}\begin{array}{l}
       \displaystyle \lim_{t\to\infty}g(t)=   \displaystyle\lim_{t\to\infty}g(0)e^{k\int_0^t x(s)^hds} \\\qquad= 
     g(0)e^{k\int_0^{\tau_\varepsilon}x(s)^h ds}e^{k\int_{\tau_\varepsilon}^{\infty}x(s)^h ds}\\\qquad\leq g(0)e^{k\int_0^{\tau_\varepsilon}1 ds}e^{k(1-\varepsilon)^h\int_{\tau_\varepsilon}^{\infty}(e^{-\gamma\varepsilon^2(t-\tau_\varepsilon)})^h ds}\\\qquad\leq g(0)e^{k\tau_\varepsilon}e^{k(1-\varepsilon)^h\int_0^\infty e^{-h\gamma\varepsilon^2s}ds}\\
     \qquad\leq g(0)e^{k\tau_\varepsilon}\exp\frac{k(1-\varepsilon)^h}{h\gamma\varepsilon^2}<+\infty,
    \end{array}\end{equation}
    which contradicts the assumption that $g(t)$ diverges to $+\infty$, yielding the claim.
\end{proof}



In Fig.~\ref{fig:3a} we illustrate the effectiveness of our approach, whose validity was proved in Theorem~\ref{th:2}. In this example, we illustrate how our adaptive-gain controller is able to promote cooperation for the prisoner's dilemma (see Example~\ref{ex:pd}), and the gain, after a steep increase, converges to a constant value, which guarantees stability of the desired equilibrium $x=0$.

\subsection{Anti-coordination game}\label{sec:anti}

We consider an anti-coordination game. Defining two strictly positive constants $\alpha=c-a>0$ and $\beta=b-d>0$, the replicator equation in~\eqref{eq:replicator} can be written as 
\begin{equation}\label{eq:anti}
    \dot x=x(1-x)(\beta-(\alpha+\beta)x),
\end{equation}
for which all the trajectories with $x(0)\in(0,1)$ converge to the mixed NE $x^*=\frac{\beta}{\alpha+\beta}$ (Proposition~\ref{prop:behavior}).

In order to solve Problem~\ref{pr:2}, we want $\dot g$ to be positive when $x$ is far from $0$, while $g(t)$ should eventually converge, implying $\dot g\to 0$. For this reason, we use a control matrix that promotes cooperation on action $2$ (i.e., control matrix $\mat{G^{(2)}}$), with the same adaptation function defined in~\eqref{eq:gain_dyn2}. 

\begin{theorem}\label{th:3}\rm
The adaptive-gain control $(\mat{G^{(2)}},\phi^{(2)})$ solves Problem~\ref{pr:2} for an anti-coordination game, for any initial condition  $x(0)\in[0,1)$.
\end{theorem}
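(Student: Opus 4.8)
The plan is to follow the same three-part strategy as in the proof of Theorem~\ref{th:2}, exploiting the sign structure of the gain dynamics. Substituting $\mat{G^{(2)}}$ from \eqref{eq:matrix2} and $\phi^{(2)}$ from \eqref{eq:gain_dyn2} into \eqref{eq:controlled_replicator}, and using the anti-coordination constants $\alpha=c-a>0$ and $\beta=b-d>0$, the closed-loop system reduces to
\begin{equation*}
\dot x = x(1-x)\big(\beta-(\alpha+\beta)x-g(1-x)\big),\qquad \dot g = kgx^h .
\end{equation*}
Since $x(t)\in[0,1]$ and $g(t)>0$ on the invariant domain $\Omega$ (Lemma~\ref{lemma:invariance}), we have $\dot g=kgx^h\geq 0$, so $g$ is monotonically nondecreasing and either converges to a finite $\bar g$ or diverges to $+\infty$. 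First I would dispatch the convergent case: if $g(t)\to\bar g<\infty$, then $\int_0^\infty kg x^h\,dt=\bar g-g(0)<\infty$, and since $g\geq g(0)>0$ this gives $\int_0^\infty x(s)^h\,ds<\infty$; as $x$ has bounded derivative on the bounded trajectory, $x^h$ is uniformly continuous, so Barbalat's lemma forces $x(t)\to 0$, establishing both requirements of Problem~\ref{pr:2}.

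The bulk of the work is then to rule out $g(t)\to+\infty$ by contradiction. The key geometric observation is that the $\dot x$-nullcline (other than $x\in\{0,1\}$) is the curve $g=N(x):=\frac{\beta-(\alpha+\beta)x}{1-x}$, which on $[0,1)$ is strictly decreasing with $N'(x)=-\alpha/(1-x)^2<0$, hence $N(x)\leq N(0)=\beta$ throughout. Consequently, as soon as $g(t)$ exceeds $\beta$ --- which must happen at some finite $t_1$ if $g\to+\infty$ --- the bracket $\beta-(\alpha+\beta)x-g(1-x)$ is strictly negative for every $x\in(0,1)$, so $x(t)$ is strictly decreasing for $t\geq t_1$ and admits a limit $x_\infty\geq 0$.

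I would then show $x_\infty=0$: if $x_\infty>0$, then $x$ stays bounded away from both $0$ and $1$, so the bracket tends to $-\infty$ as $g\to+\infty$ while $x(1-x)$ stays bounded below by a positive constant; integrating $\dot x$ would then drive $x$ below $0$, a contradiction. With $x_\infty=0$, once $x(t)\leq\tfrac12$ and $g(t)$ is sufficiently large the bracket is bounded above by a negative constant, giving $\dot x\leq-\tfrac12 x$; Gronwall's inequality then yields exponential decay of $x$, whence $\int_0^\infty x(s)^h\,ds<\infty$. Integrating the gain equation gives $g(t)=g(0)\exp\big(k\int_0^t x(s)^h\,ds\big)$, which stays bounded, contradicting $g\to+\infty$ and completing the proof.

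The main obstacle is this contradiction step, but it is actually more transparent here than in Theorem~\ref{th:2}: because the anti-coordination dynamics are intrinsically repelled from $x=1$ (the bracket equals $-\alpha<0$ at $x=1$), there is no need to compare exponential growth rates near $x=1$, and hence no condition on $k$ beyond $k>0$ is required. The one point demanding care is the passage from convergence of $g$ to $x\to 0$, which I would handle through the integrability argument and Barbalat's lemma above rather than by merely asserting $\dot g\to 0$.
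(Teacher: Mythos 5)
Your proposal is correct and takes essentially the same approach as the paper: the paper's proof of Theorem~\ref{th:3} is a brief sketch deferring to the contradiction argument of Theorem~\ref{th:2} ($g$ nondecreasing, hence convergent or divergent; divergence ruled out by showing $x$ decays exponentially, so that $\int_0^\infty x(s)^h\,ds<\infty$ and the integrated gain equation keeps $g$ bounded), and your write-up fills in exactly this argument, including the paper's key observation that no condition on $k$ is needed because $\dot x<0$ near $x=1$ regardless of $g$. Your Barbalat-lemma treatment of the convergent case and the nullcline monotonicity computation are tidier than the paper's terse assertions, but they are refinements of the same route rather than a genuinely different one.
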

\begin{proof}
The proof follows the one of Theorem~\ref{th:2}, i.e., after inserting~\eqref{eq:gain_dyn2} and~\eqref{eq:matrix2} into \eqref{eq:controlled_replicator}, we proceed by contradiction, showing that $g(t)$ cannot diverge, and thus $x(t)$ must converge to $0$. Here, we do not need any requirement on the velocity of the adaptation process, since $\dot x$ is always negative close to $x=1$, regardless of the value of $g(t)$.
\end{proof}

Figure~\ref{fig:3b} illustrates the effectiveness of our algorithm for a minority game (see Example~\ref{ex:min}), which is guaranteed by Theorem~\ref{th:3}. We observe that, after an initial transient in which the system quickly approaches the mixed NE $x^*$ of the uncontrolled system, the adaptation function adaptively increases the gain in order to leave such an equilibrium and enforce convergence to the desired consensus state. Importantly, we observe that no information on the payoff structure is needed (not even estimates of $\alpha$ and $\beta$), as the adaptive-gain mechanisms is able to autonomously change the value of the gain in order to reach our goal.


\section{Conclusion}\label{sec:conclusion}

In this paper, we proposed and analyzed a novel adaptive-gain controller for evolutionary-game dynamics. In particular, we focused on a population of individuals who play a $2$-action matrix game and revise their actions following a replicator equation. In this scenario, we illustrated how to design the controller  to steer the system to a desired equilibrium, in terms of deciding which entry of the payoff matrix to control and how to design the adaptation function depending on the characteristics of the game, and we establish analytical guarantees on its effectiveness. In view of its close-loop structure and on the limited amount of information on the game needed to design it, our controller may have important applications in designing interventions to promote a desired collective behavior, such as cooperation in social dilemmas.

The promising results in this paper pave the way for several directions of future work. First, our results suggests that, while always able to solve the problem of interest, the efficiency of the adaptive-gain control schemes may vary depending on the adaptation function. Here, we analyzed three specific implementations, while a general treatment of the adaptive-gain controller and the study of  optimal control policies is left for future research. Second, it will be of interest to consider evolutionary dynamics in networked and structured populations~\cite{Barreiro-Gomez2016,Como2021,Govaert2022network}, where each population may have access only to local information to update their gain, as well as to consider more complex classes of revision protocols and games~\cite{Sandholm2010}. Third, in many real-world scenarios the  population behavior affects some boundary conditions of the system, ultimately impacting the game's payoff matrix. This is captured by population dynamics with  environmental feedback~\cite{Gong2022}, for which our controllers should be extended.


\end{document}